\DeclareMathOperator{\ar}{ar} 
\newcommand{\ignore}[1]{}
\DeclareMathOperator{\low}{low}
\DeclareMathOperator{\res}{res}
\DeclareMathOperator{\high}{high}
\newtheorem{theorem}{Theorem}[section]
\newtheorem{definition}[theorem]{Definition}
\newtheorem{lemma}[theorem]{Lemma}
\newtheorem{corollary}[theorem]{Corollary}
\newcommand\red[1]{#1}
\newcommand\blue[1]{#1}
\DeclareMathOperator{\median}{median}
\DeclareMathOperator{\Csp}{CSP}
\newtheorem{expl}[theorem]{Example}
\newtheorem{proposition}[theorem]{Proposition}
\title[Median-Closed Semilinear CSPs]{A polynomial-time algorithm for \\
median-closed semilinear constraints}
\author{Manuel Bodirsky}
\author{Marcello Mamino}
\date{\today}
\begin{document}

\maketitle


\begin{abstract}
A subset of ${\mathbb Q}^k$ is called \emph{semilinear} (or \emph{piecewise linear})
if it is a Boolean combination of linear half-spaces. 
We study the computational complexity of the constraint satisfaction problem (CSP) over the rationals when all the constraints are semilinear.
When the sets are convex the CSP is polynomial-time equivalent to linear programming. A semilinear relation is convex
if and only if it is preserved by taking \emph{averages}. 
Our main result is a polynomial-time algorithm 
 for the CSP of semilinear constraints that are preserved by applying \emph{medians}. 
 We also prove that this class is maximally tractable in the sense that any larger class of 
 semilinear relations has an NP-hard CSP. 
To illustrate, our class contains all relations that can be expressed by linear inequalities with at most two variables (so-called TVPI constraints),
but it also contains many non-convex relations,
for example constraints 
 of the form
$x \in S$ for arbitrary finite $S \subseteq {\mathbb Q}$, or more generally disjunctive constraints of the form
$x \circ_1 c \vee y \circ_2 d$ for $\circ_1,\circ_2 \in \{\leq,\geq,<,>\}$ and $c,d \in {\mathbb Q}$. 
\end{abstract}

\section{Introduction}
The problem of deciding whether a set of linear inequalities has a solution over the rational numbers is one of the most important computational problems, both in theory and practice of programming (see, e.g., \cite{Smale}). The problem can be solved in polynomial time, e.g., by the ellipsoid method~\cite{Khachiyan}. 
It is known that the problem
remains in P even if some of the input inequalities
are strict, or if the input also contains inequalities
of the form $a_1 x_1 + \cdots + a_k x_k \neq a_0$
where $x_1,\dots,x_k$ are variables and  $a_0,a_1,\dots,a_k$ are rational numbers. 
On the other hand, if we additionally
allow constraints of the form $x \in \{0,1\}$ in the input, i.e., if we can require that some of the variables either take value $0$ or value $1$, the
resulting computational problem is easily seen to be NP-complete. 

One might ask which constraint relations can be additionally allowed in the input so that the respective satisfiability problem remains in P. This question has been solved completely for sets of \emph{semilinear}
(also known as \emph{piecewise linear}) relations. 
A subset of ${\mathbb Q}^k$ is semilinear 
if it is a Boolean combination of closed linear half-spaces. If each of the additional semilinear relations $R \subseteq {\mathbb Q}^k$ is \emph{essentially convex}, i.e., if for all $x,y \in R$ 
there are only finitely many points on the line segment between $x$ and $y$ that are \emph{not} contained in $R$, then the constraint satisfaction problem (CSP) can be solved in polynomial time, and otherwise it is NP-complete~\cite{Essentially-convex}. 
Therefore, the class of essentially convex
relations is called \emph{maximally tractable (within the class of semilinear relations)} in the sense that adding any semilinear relation to it which is not essentially convex has an NP-hard CSP. 

A semilinear relation $R \subseteq {\mathbb Q}^k$ is convex if and only if it is preserved by applying component-wise the average operation, $(x,y)\mapsto \frac{x+y}{2}$. Such preservation conditions (also called \emph{polymorphisms}; see Section~\ref{sect:majority}) have been crucial in solving the famous Feder-Vardi dichotomy conjecture for the complexity of finite-domain CSPs~\cite{FederVardi,BulatovFVConjecture,ZhukFVConjecture}. 
In this paper we identify a new preservation condition for semilinear relations that leads to polynomial-time tractability of the CSP, namely being preserved by componentwise median. 
Examples of semilinear relations preserved by
median are so-called TVPI constraints (for \emph{two variables per inequality}; they are also preserved by average). 
Our class is maximally tractable: adding any  relation that is not median-closed leads to an NP-hard CSP 
(Theorem~\ref{thm:maximality}). 
It contains all
constraints of the form $x \circ_1 c_1 \vee y \circ_2 c_2$ for 
$\circ_1,\circ_2 \in \{\leq,\geq,<,>\}$ and 
arbitrary rationals $c_1,c_2$, among many other non essentially-convex relations. 
Note that with these constraints we can also express the constraint $x \in \{0,1\}$ as
$$x \geq 0 \wedge x \leq 1 \wedge (x \leq 0 \vee x \geq 1)$$ and more generally we can express the constraint $x \in S$ for any finite $S \subseteq {\mathbb Q}$. 
These constraints are practically relevant 
since in many applications we are looking for solutions where some of the variables may take values from ${\mathbb Q}$ while other variables range only over a finite set.

Over a finite domain, CSPs with a \emph{majority polymorphism} (such as the median) 
are known to be solvable in polynomial time by establishing local consistency~\cite{FederVardi,CCC}. This result does not extend to our setting, since it is unclear how to establish $(2,3)$-consistency for TVPI constraints in polynomial time.
More recently, it has been shown that if a finite-domain CSP can be solved by establishing local consistency, then already (a restricted version of) singleton arc-consistency solves the CSP~\cite{Kozik16}. It is not clear how to
extend the results to infinite domains;
moreover, it is not clear how to establish singleton arc-consistency of median-closed semilinear constraints in polynomial time.

For our polynomial-time algorithm we combine
universal-algebraic ideas from~\cite{BP,CCC,FederVardi} and algorithmic ideas from~\cite{Shostak,AspvallShiloach,HochbaumNaor}.
First we provide an explicit characterisation of those semilinear relations that are preserved by
median, which in particular allows us to reduce 
general median-closed relations to binary ones. 
Then, as in Hochbaum-Naor~\cite{HochbaumNaor}, 
we use bound propagation together with variable elimination \`a la Fourier-Motzkin. In order to deal with the disjunctive constraints in the bound propagation, 
we have to generalise the result of Shostak about unsatisfiability of TVPI instances~\cite{Shostak}. Our algorithm is \emph{strongly polynomial}, unlike the known algorithms for the linear program feasibility problem.
The existence 
of a strongly polynomial algorithm for linear programming is a famous open problem~\cite{Smale} (Smale's 9th problem), 
whereas for TVPI constraints, strongly polynomial algorithms were known before~\cite{AspvallShiloach,Megiddo:1983:TGP,HochbaumNaor}.


We mention that the class of median-closed
constraints also appears in the study of 
\emph{valued constraint satisfaction problems} 
for cost functions that are submodular; for the definitions of the concepts that appear in this paragraph, see~\cite{KolmogorovThapperZivny15,GenVCSP15}.
A function is \emph{submodular} with respect to some linearly ordered domain if and only if it is preserved by a certain binary symmetric fractional polymorphism,
namely the fractional polymorphism that equals
$\min$ and $\max$ with probability $0.5$ each;
hence, the support of submodular cost functions must be both $\min$- and $\max$-closed. 
The valued constraint satisfaction problem for a large class of submodular semilinear cost functions was shown to be in P~\cite{BodirskyMaminoViola};
however, for the class of \emph{all} submodular semilinear cost functions, the complexity has
not yet been classified. 
Since the
median operation can be expressed as
$$\min(\max(x,y),\max(y,z),\max(x,z))$$
the support of a submodular function is also $\median$-closed. 
Hence, our result also implies that the `crisp part'
of submodular semilinear valued constraint satisfaction problems (i.e., the feasibility problem for these optimisation problems) can be solved in polynomial time.

\medskip 
{\bf Outline.} 
In Section~\ref{sect:csps} we formally introduce CSPs and polymorphisms. 
In Section~\ref{sect:syntax} we provide an explicit description of the semilinear relations preserved by median in terms of syntactically restricted quantifier-free formulas with rational parameters over the structure $({\mathbb Q};+,\leq)$. 
Section~\ref{sect:shostak} presents 
a generalisation of Shostak's theorem which characterises unsatisfiability not only for TVPI constraints, but more generally for instances of median-closed semilinear constraints. 
We then present in Section~\ref{sect:alg} the algorithm for general median-closed constraints. 
The maximal tractability of median-closed semilinear constraints is treated in  Section~\ref{sect:maximality}. 
In Section~\ref{sect:open} we mention open research problems that are relevant for 
the more ambitious research goal of classifying the computational complexity of \emph{all} semilinear constraints languages.

\section{Constraint Satisfaction Problems}
\label{sect:csps}
Let $\tau$ be a relational signature,
i.e., a set of relation symbols $R$, 
each equipped with an arity $\ar(R) \in {\mathbb N}$. 
A \emph{$\tau$-structure} $\Gamma$
consists of a set $D$ (the \emph{domain}) together with a relation
$R^{\Gamma} \subseteq D^{\ar(R)}$ 
for each $R \in \tau$. 
The \emph{constraint satisfaction problem}
for a structure $\Gamma$ with finite relational signature $\tau$ is the computational problem of deciding for a given
conjunction $\Phi$ of atomic $\tau$-formulas
whether $\Phi$ is satisfiable over $\Gamma$. 

\subsection{Polymorphisms}
A function $f \colon D^k \to D$ of arity $k \in {\mathbb N}$ \emph{preserves}
a relation $R \subseteq D^m$ if 
$$(f(a_{1,1},\dots,a_{k,1}),\dots,f(a_{1,m},\dots,a_{k,m})) \in R$$
for all
$(a_{1,1},\dots,a_{1,m}),\dots,(a_{k,1},\dots,a_{k,m}) \in R$. 
A \emph{polymorphism} of a relational structure $\Gamma$ with domain $D$ is a function 
$f \colon D^k \to D$, for some $k \in {\mathbb N}$,  that preserves all relations in $\Gamma$. 
For structures $\Gamma$ with a finite domain, 
it is known that the computational complexity
of $\Csp(\Gamma)$ only depends on 
the set of polymorphisms of $\Gamma$ (see~\cite{JBK}). 

\subsection{Majority Polymorphisms}
\label{sect:majority} 
An operation $f \colon D^3 \to D$ satisfying
$$f(x,x,y) = f(x,y,x) = f(y,x,x) = x$$
is called a \emph{majority operation}. 
An example of a majority operation is the \emph{median-operation}. The definition of the median-operation requires $D$ to be linearly ordered. Let $x,y,z \in D$. Choose $u,v,w \in D$ such that $\{u,v,w\} = \{x,y,z\}$ and $u \leq v \leq w$. Then $\median$ is defined by 
$$\median(x,y,z) := v \, .$$
It is well-known (see, e.g.,~\cite{Rendezvous}) that a relation over the Boolean domain $\{0,1\}$, with $0<1$, is preserved by
the median operation if and only if it can be defined by 2-SAT formula (i.e., by a Boolean formula in conjunctive normal form where each clause has at most two variables). 
An example of a subset of ${\mathbb Q}^2$ that is preserved by the median operation can be found in Figure~\ref{fig:median}. 

A relation $R \subseteq D^m$ 
is called \emph{2-decomposable} 
if a tuple $(t_1,\dots,t_m) \in D^m$
 is contained in $R$ if and only if
for all $i,j \leq m$ there exists a tuple
$(s_1,\dots,s_m) \in R$ with $s_i=t_i$ and $s_j=t_j$. 
In other words, $R$ can be expressed as a conjunction of binary projections of $R$. 

\begin{theorem}[of \cite{CCC,BP}]\label{thm:decomp}
Let $\Gamma$ be a structure with domain $D$ and a majority polymorphism. 
Then every relation in $\Gamma$ is 2-decomposable. 
\end{theorem}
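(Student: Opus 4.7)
The plan is to prove the theorem by induction on the size of the subset of coordinates on which we can realise the candidate tuple by an element of $R$. Fix $R \subseteq D^m$ and a majority polymorphism $f$ of $\Gamma$, and let $t = (t_1,\dots,t_m) \in D^m$ satisfy the 2-decomposability hypothesis: for every pair of indices $i,j \in \{1,\dots,m\}$ there exists $s \in R$ with $s_i = t_i$ and $s_j = t_j$. We must show $t \in R$. I will prove by induction on $k \in \{2,3,\dots,m\}$ the following strengthening:

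\emph{For every subset $S \subseteq \{1,\dots,m\}$ with $|S| = k$, there exists a tuple $s^S \in R$ such that $s^S_i = t_i$ for all $i \in S$.}

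The base case $k=2$ is immediate from the hypothesis on $t$. For the inductive step, assume the claim holds for some $k \geq 2$ and let $S \subseteq \{1,\dots,m\}$ with $|S| = k+1$; pick three distinct indices $i_1, i_2, i_3 \in S$ (which exist since $k+1 \geq 3$). By the induction hypothesis applied to the three $k$-element sets $S \setminus \{i_1\}$, $S \setminus \{i_2\}$, $S \setminus \{i_3\}$, we obtain tuples $a, b, c \in R$ agreeing with $t$ on those sets respectively. Now consider the tuple $u := f(a,b,c)$ computed coordinatewise; it lies in $R$ since $f$ is a polymorphism. For an index $i \in S$ with $i \notin \{i_1,i_2,i_3\}$, all three of $a_i, b_i, c_i$ equal $t_i$, so $u_i = f(t_i,t_i,t_i) = t_i$ (a majority operation is idempotent on constant inputs). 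For $i = i_1$, we have $b_i = c_i = t_i$, and the majority equation $f(x,y,y) = y$ gives $u_i = t_i$; the cases $i = i_2, i_3$ are symmetric. Hence $u$ witnesses the claim for $S$. Taking $k = m$ and $S = \{1,\dots,m\}$ yields $t \in R$, completing the proof.

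The proof is essentially a standard universal-algebraic argument and presents no serious obstacle; the one point requiring a moment of care is that the inductive step needs $|S| \geq 3$, which is why the base case must be established at $k = 2$ directly from the 2-decomposability hypothesis, rather than only at $k = 1$.
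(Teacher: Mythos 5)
Your proof is correct, and it is the standard Baker--Pixley-style argument that the cited sources [CCC, BP] use; the paper itself does not include a proof of this theorem but instead cites it and comments on the transfer from finite to infinite domains. Your induction on the size of the set of coordinates, with the inductive step combining three witness tuples via the majority polymorphism, is exactly the classical universal-algebraic argument, and you correctly flag the only subtle point (the step needs $|S|\geq 3$, so the base case must start at $k=2$). One tiny remark for completeness: when $m\leq 1$ the 2-decomposability condition is vacuous, and when $m=2$ the base case alone suffices, so the argument covers all arities.
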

Note that Theorem~3.5 in~\cite{CCC} (the implication $(1) \Rightarrow (2)$) states this result only for finite domains, but Section 4.4 explicitly treats 
the case of infinite domains and mentions that the proof of Theorem 3.5 remains unchanged after a modification
of the statement that does not involve item (1) and (2). 


\section{Semilinear Relations}
\label{sect:syntax}
A relation $R \subseteq {\mathbb Q}^k$ of arity $k \in {\mathbb N}$ is called
\emph{semilinear} if there exists a first-order formula $\phi(x_1,\dots,x_k)$ that \emph{defines} it over $({\mathbb Q};+,1,\leq)$, i.e., 
we have $(u_1,\dots,u_k) \in R$ if and only if $\phi(u_1,\dots,u_k)$ holds in $({\mathbb Q};+,1,\leq)$. 
Ferrante and Rackoff~\cite{FerranteRackoff}
showed that the structure $({\mathbb Q};+,1,\leq)$
has quantifier elimination, and consequently that
every semilinear relation is a Boolean combination
of closed half-spaces.

A formula of the form
$x \circ d$ with $d \in {\mathbb Q}$ is called an \emph{upper bound (on $x$)} if $\circ \in \{\leq,<\}$,
and a \emph{lower bound (on $x$)} if $\circ \in \{\geq,>\}$. A \emph{bound (on $x$)} is either
a lower or an upper bound. 
Bounds of the form $x \leq d$ or $x \geq d$ are called \emph{weak bounds}, and bounds of the form $x < d$ or $x > d$ are called \emph{strict bounds}. 
We additionally allow that $d = +\infty$ or that  
$d = -\infty$; 
the bounds $x \leq +\infty$ and $x \geq -\infty$ are satisfied by all $x \in {\mathbb Q}$,
and the bounds $x \geq +\infty$ and $x \leq -\infty$ are satisfied by no $x \in {\mathbb Q}$.

\begin{figure}
  \begin{center}
  \includegraphics[scale=0.3]{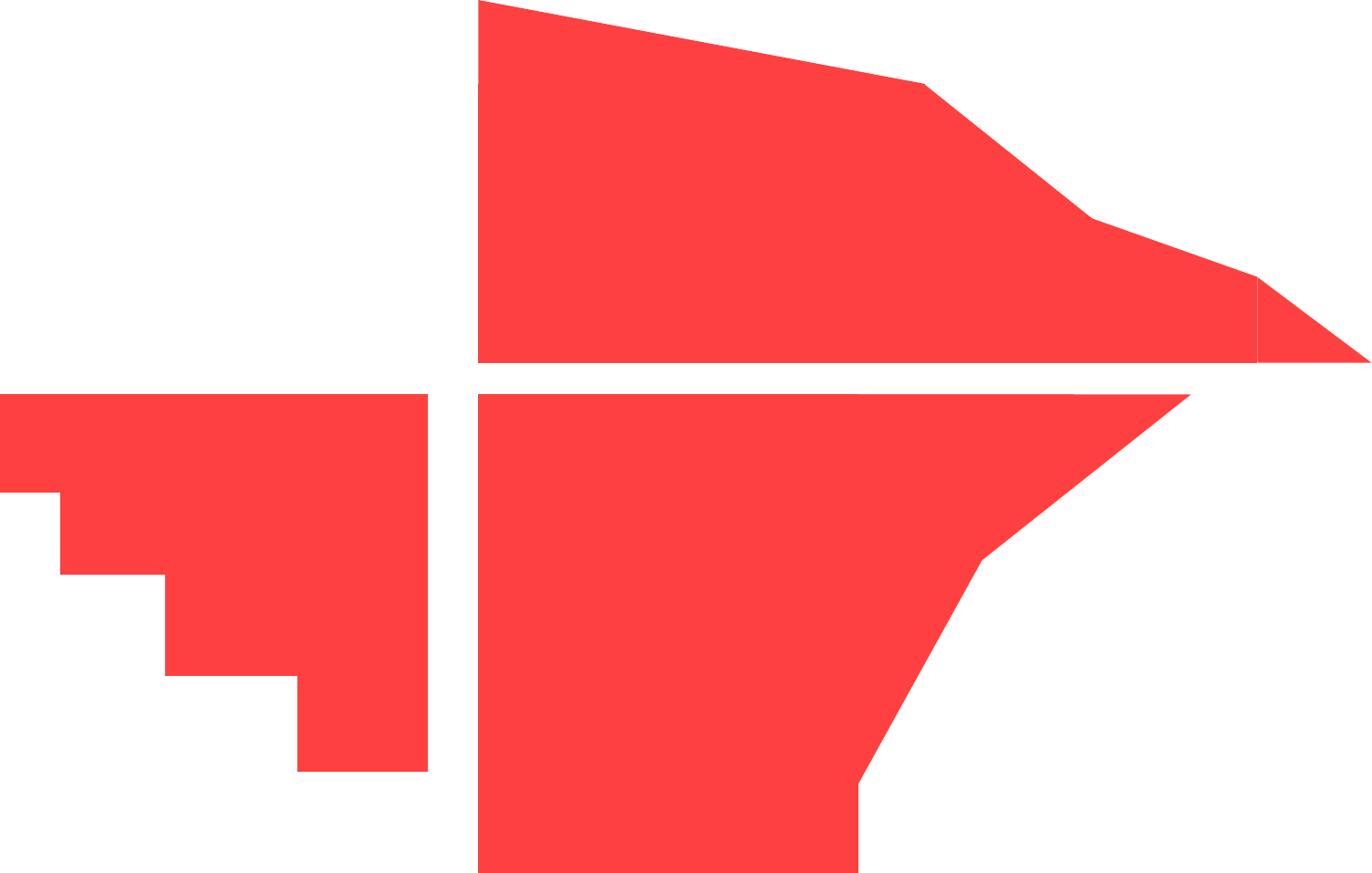}
  \end{center}
\caption{An example of a median-closed subset of ${\mathbb Q}^2$.}
  \label{fig:median}
\end{figure}

Let $a,b,c \in {\mathbb Q}$ be rational numbers. 
A \emph{two-variable weak linear inequality} (short: \emph{TVPI}) on the variables $x$ and $y$ is a
formula 
of the form $ax + by \leq c$, and 
a \emph{two-variable strict linear inequality} (on the variables $x$ and $y$) is a formula 
of the form $ax + by < c$. 
A \emph{two-variable linear inequality} is either
a weak or a strict two-variable linear inequality. 
We make the convention that in two-variable
linear inequalities we have $a \neq 0$ since otherwise we can replace it by an equivalent bound on $y$;
likewise, we assume that 
$b \neq 0$. If both $a$ and $b$ are $0$,
then the inequality is either equivalent
to $\top$ (true) or $\perp$ (false). 
Also in the special case where both variables 
are equal we have that the formula 
$ax+bx \circ c$, for $\circ \in \{<,>,\leq,\geq\}$, 
is equivalent to a bound on $x$. 
The formula $ax+bx \leq c$ is equivalent to 
$x \geq d$ or $x \leq d$ with $d \in \{-\infty,+\infty\}$
if and only if $a+b = 0$. So we also make
the convention that in two-variable linear inequalities
the two variables are not the same, since otherwise we can replace it by an equivalent bound. 

\begin{definition}\label{def:bend}
A \emph{bend} is a formula of the form 
$$x \circ_1 d_1 \; \vee \; (a_1x+a_2y) \circ c \; 
\vee \; y \circ_2 d_2$$
where $\circ \in \{\leq,<\}$, 
$\circ_1, \circ_2 \in \{\leq,\geq,<,>\}$,
$a_1,a_2 \in {\mathbb Q} \setminus \{0\}$, 
and $c,d_1,d_2 \in {\mathbb Q} \cup \{-\infty,+\infty\}$
are \blue{such that $\circ_i \in \{\leq,<\}$ if and only if $a_i>0$, for $i=1$ and $i=2$.}
\end{definition}

Note that by choosing $c = -\infty$ 
a bend can
be used to express any disjunction 
$x \circ_1 d_1 \vee x \circ_2 d_2$ of two bounds. 
Also note that by choosing $d_1,d_2 \in \{-\infty,+\infty\}$
one can also use bends to express arbitrary two-variable linear inequalities, bounds, $\top$ (true), and $\perp$ (false). We therefore view all of these formulas as bends, too. In particular, when we remove a disjunct from a bend we again obtain a bend. A formula is called \emph{bijunctive} if it is a conjunction of bends.



\begin{figure}
  \begin{center}
  \includegraphics[scale=0.3]{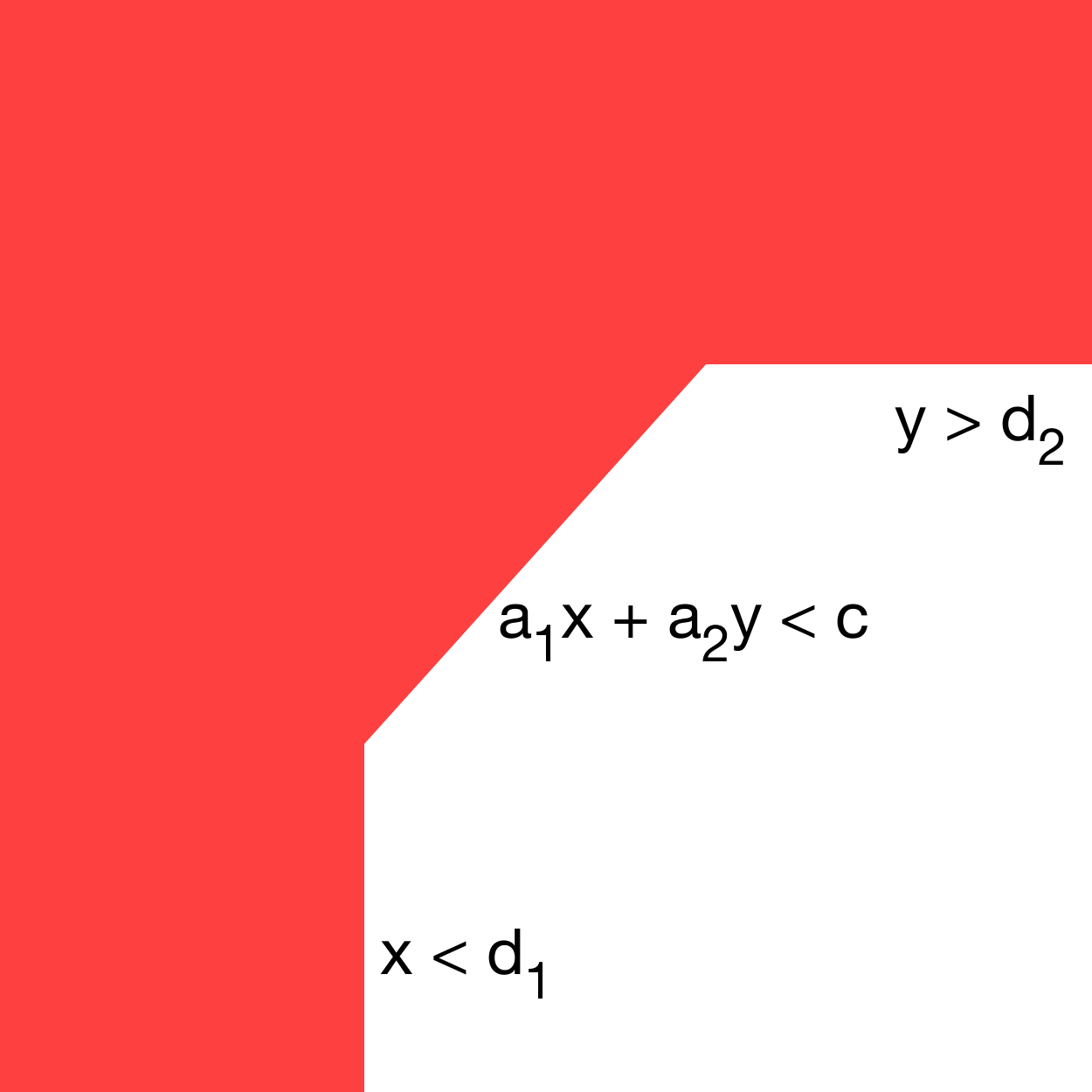}
  \end{center}
\caption{Illustration of the bend $x < d_1 \vee ax + by < c \vee y > d_2$.}
  \label{fig:bend}
\end{figure}
\ignore{
\begin{itemize}
\item $a,b > 0$ and 
$\circ_1,\circ_3 \in \{\leq,<\}$;
\item $a > 0$, $b < 0$, 
and $\circ_1 \in \{\leq,<\}$, $\circ_3 \in \{\geq,>\}$; 
\item $a < 0$, $b > 0$, 
and $\circ_1 \in \{\geq,>\}$, $\circ_3 \in \{\leq,<\}$; 
\item $a,b < 0$ and $\circ_1,\circ_3 \in \{\geq,>\}$. 
\end{itemize}
}

\begin{theorem}\label{thm:syntax} 
Let $R \subseteq {\mathbb Q}^k$ be a semilinear relation.
Then the following are equivalent. 
\begin{enumerate}
\item $R$ is preserved by the median operation.  
\item $R$ can be defined by a conjunction of binary relations $S$ with the property that for each $(u,v) \in {\mathbb Q}^2 \setminus S$
at least one of the following 
sets is contained in 
${\mathbb Q}^2 \setminus S$, too:
\begin{align*}
U_1 & := \{(u',v') \mid u' \geq u, v' 
\geq v\}, & U_2 & := \{(u',v') \mid u' \geq u, v' \leq v\}, \\
U_3 & := \{(u',v') \mid u' \leq u, v' \geq v\}, &
U_4 & := \{(u',v') \mid u' \leq u, v' \leq v\} \\
U_5 & := \{(u',v) \mid u' \in {\mathbb Q}\},
& U_6 & := \{(u,v') \mid v' \in {\mathbb Q}\}. 
\end{align*}
\item $R$ can be defined by a bijunctive formula. 
\end{enumerate}
\end{theorem}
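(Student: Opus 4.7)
The plan is to establish the cycle $(3) \Rightarrow (1) \Rightarrow (2) \Rightarrow (3)$; the first two implications are relatively short while the third is the main technical work.

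For $(3) \Rightarrow (1)$, I note that intersections of median-closed relations are median-closed, so it suffices to check that a single bend is preserved by median. Given three tuples satisfying $x \circ_1 d_1 \vee (a_1 x + a_2 y) \circ c \vee y \circ_2 d_2$, I would classify each tuple by which of the three disjuncts it satisfies. The sign conditions of Definition~\ref{def:bend} (namely $\circ_i \in \{\leq,<\}$ iff $a_i > 0$) ensure the three disjuncts are monotone in a consistent orientation, so the componentwise median preserves satisfaction of at least one disjunct.

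For $(1) \Rightarrow (2)$, by Theorem~\ref{thm:decomp} $R$ is 2-decomposable, and the binary projections of $R$ remain median-closed and semilinear; so it suffices to prove (2) for a binary median-closed semilinear $S$. Given $(u,v) \notin S$, I argue by contradiction: suppose $p_i = (x_i, y_i) \in U_i \cap S$ exists for every $i \in \{1, \ldots, 6\}$. The points $p_5 = (x_5, v)$ and $p_6 = (u, y_6)$ must satisfy $x_5 \neq u$ and $y_6 \neq v$. A four-case analysis on the signs of $x_5 - u$ and $y_6 - v$ selects a unique $j \in \{1,2,3,4\}$ such that $\median(p_5, p_6, p_j) = (u, v)$: for instance, if $x_5 > u$ and $y_6 > v$ then $j = 4$, where the interleavings $x_4 \leq u \leq x_5$ and $y_4 \leq v \leq y_6$ give the desired identity. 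Median-closure then forces $(u, v) \in S$, contradicting our assumption.

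For the main obstacle $(2) \Rightarrow (3)$, I would first establish $(2) \Rightarrow (1)$ via a dual case analysis: if three tuples in $S$ have componentwise median $(u,v) \notin S$, then the forbidden $U_i$ supplied by (2) must contain one of the three tuples (immediate for $U_5, U_6$ since the median of three scalars equals one of them, and a short inspection for $U_1, \ldots, U_4$ using the index realising each coordinate-wise median), contradicting $S$-membership. Combined with Theorem~\ref{thm:decomp} this reduces $(2) \Rightarrow (3)$ to the binary claim that every median-closed semilinear $S \subseteq {\mathbb Q}^2$ admits a finite bijunctive definition.

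To prove this binary claim I would start from a quantifier-free defining formula of $S$ in $({\mathbb Q};+,1,\leq)$ in conjunctive normal form, and argue that each clause of a non-redundant CNF must already have the bend structure. The crux is that median-closure severely restricts the form of each clause: if a clause $\bigvee_k \ell_k$ has more than three atoms, or involves atoms whose coefficient signs fail the bend template of Definition~\ref{def:bend}, one can exhibit three tuples in $S$ whose componentwise median violates the clause, contradicting median-closure. The most delicate case concerns clauses combining bounds on $x$ and $y$ with a TVPI atom $a_1 x + a_2 y \circ c$; the verification that the signs of $a_1, a_2$ are compatible with the types $\circ_1, \circ_2$ of the flanking bounds --- exactly the content of Definition~\ref{def:bend} --- is where the sign-compatibility condition originates, and this is where I expect the bulk of the technical effort to lie.
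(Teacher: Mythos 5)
Your implications $(3)\Rightarrow(1)$ and $(1)\Rightarrow(2)$ match the paper's arguments (the paper runs the cycle $(1)\Rightarrow(2)\Rightarrow(3)\Rightarrow(1)$, but the content of these two steps is the same). The genuine gap is in your $(2)\Rightarrow(3)$, where you depart from the paper. You reduce to the binary case correctly, but the claim that ``each clause of a non-redundant CNF must already have the bend structure'' is false: an arbitrary non-redundant CNF defining a median-closed set can perfectly well contain clauses that are not bends. Take $S=\{(0,0),(1,1)\}$, which is median-closed, and the non-redundant CNF
\[
(x\geq 0)\wedge(y\leq 1)\wedge(x\leq y)\wedge(x\geq y)\wedge\bigl(x\leq 0\;\vee\; x+y\geq 2\bigr).
\]
The last clause is not a bend: written as $x\leq 0\vee(-x-y\leq -2)$ the TVPI coefficient of $x$ is negative, so the sign condition of Definition~\ref{def:bend} forces the bound on $x$ to be a lower bound, yet here it is the upper bound $x\leq 0$. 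Removing this clause changes the defined set (one gets the whole diagonal segment), so it is non-redundant. Thus your plan to ``exhibit three tuples in $S$ whose componentwise median violates the clause'' cannot succeed for this clause --- $S$ \emph{is} median-closed and the clause nonetheless sits in a non-redundant CNF for it. The clause itself is not median-closed as a standalone set, but that only means this particular CNF is not bijunctive, not that $S$ has no bijunctive CNF. What you actually need to show is that \emph{some} bijunctive CNF exists, which is a constructive statement that your redundancy argument does not deliver.

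The paper proves $(2)\Rightarrow(3)$ by a direct geometric construction: write $\mathbb{Q}^2\setminus R$ as $V_1\cup\cdots\cup V_6$ where $V_i$ collects the points $(u,v)$ for which $U_i\subseteq\mathbb{Q}^2\setminus R$, so $R=V_1'\cap\cdots\cap V_6'$; then $V_5',V_6'$ are cylinders handled by disjunctions of two bounds, and each $V_i'$ for $i\leq 4$ is monotone in both coordinates (e.g.\ $V_1'$ is closed downward), which for semilinear sets yields a piecewise-linear ``staircase'' boundary expressible by finitely many bends of the appropriate orientation. That monotonicity-based construction, not a syntactic normalisation of an arbitrary CNF, is the missing ingredient your sketch would need.
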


\begin{proof}
\fbox{$(1)$ implies $(2)$.} 
Since the median is a majority operation,  
Theorem~\ref{thm:decomp} implies that $R$ can
be defined by a conjunction of binary projections of $R$; clearly, these projections are again median-closed. So it suffices to show (2) for binary relations $R$. 
Let $(u,v) \in {\mathbb Q}^2 \setminus R$ and suppose for contradiction 
that each of the sets $U_i$, for $i 
\in \{1,\dots,6\}$, contains a point $p_i$ from $R$. 
Suppose that $p_5 = (u,v')$ for $v' < v$
and that $p_6 = (u',v)$ for $u' < u$. 
Then $\median(p_5,p_6,p_1) = (u,v) \in S$
since $S$ is preserved by $\median$,
in contradiction to the assumption that $(u,v) \notin R$. The cases where $p_5 = (u,v')$ for $v' > v$
or $p_6 = (u',v)$ for $u' > u$ can be treated similarly. 

\medskip 
\fbox{$(2)$ implies $(3)$.} Let $R \subseteq {\mathbb Q}^2$ be a relation that satisfies the property from $(2)$. Then ${\mathbb Q}^2 \setminus R$
is the union of the following six semilinear sets: 
\begin{align*}
V_1 & := \{(u,v) \mid (u',v') \notin R \text{ for all } u' \geq u,v' \geq v\} \\ 
V_2 & := \{(u,v) \mid (u',v') \notin R \text{ for all } u' \geq u,v' \leq v\} \\ 
V_3 & := \{(u,v) \mid (u',v') \notin R \text{ for all } u' \leq u,v' \geq v\} \\ 
V_4 & := \{(u,v) \mid (u',v') \notin R \text{ for all } u' \leq u,v' \leq v\} \\ 
V_5 & := \{(u,v) \mid (u,v') \notin R \text{ for all } v' \in {\mathbb Q} \}  \\
V_6 & := \{(u,v) \mid (u',v) \notin R \text{ for all } u' \in {\mathbb Q} \}  
\end{align*}
So it suffices to describe the complements of these six sets using bijunctive formulas. 
For $V_5$ and $V_6$ this can be
done by conjunctions of formulas of the form $x \circ_1 d_1 \vee x \circ_2 d_2$ for $\circ_1,\circ_2 \in \{<,>,\leq,\geq\}$ and $d_1,d_2 \in {\mathbb Q} \cup \{-\infty,+\infty\}$. 
Let us now discuss how to define the complement $V_1'$ of $V_1$; for the other sets the argument is symmetric. Note that $V_1'$ has the property that for all $(x',y') \in {\mathbb Q}^2$ and
$(x,y) \in V_1'$ if $x' \leq x$ and $y' \leq y$ 
then $(x',y') \in V_1'$. 

\def\Q{\mathbb Q}

\vskip\baselineskip
\noindent{\bf Claim.}
Let $V\subseteq \Q^2$ be a semilinear set such
$V$ contains all $(x',y') \in \Q^2$ that 
there exists $(x,y)\in V$ with $x'\le x$ and $y'\le y$. 
Then $V$ has a bijunctive definition.
\vskip\baselineskip

If $V \in \{\emptyset,\Q^2\}$ the statement is
trivial; so assume that this is not the case. We associate to $V$ the following functions
\begin{align*}
f_V \colon \Q &\to \Q \\
t &\mapsto \sup\{r \mid (r+t,r-t)\in V\}\\
g_V \colon \Q &\to \{0,1\}\\
t &\mapsto \begin{cases}
0 &\text{if $(f_V(t)+t,\,f_V(t)-t) \notin V$} \\
1 &\text{if $(f_V(t)+t,\,f_V(t)-t) \in V$}
\end{cases}
\end{align*}
For an illustration, see Figure~\ref{fig:syntax}.

\begin{figure}[h]
  \begin{center}
     \includegraphics[scale=0.8]{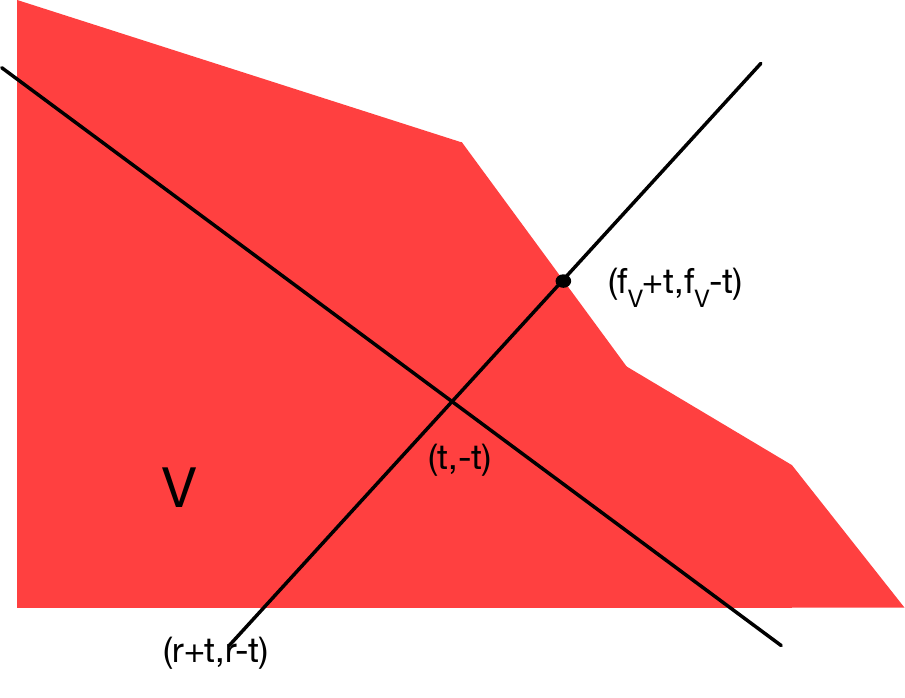}
  \end{center}
\caption{An illustration for the proof of the implication $(2) \Rightarrow (3)$ in Theorem~\ref{thm:syntax}.}
  \label{fig:syntax}
\end{figure}

Observe that the supremum takes values in~$\Q\cup\{-\infty,\infty\}$ by
quantifier elimination. Moreover the values $-\infty,\infty$ can be
excluded. In fact, 
for every $t \in {\mathbb Q}$  
 the set of~$r \in {\mathbb Q}$ such that~$(r+t,r-t)\in V$ is obviously
either empty, or~$\Q$, or a left half-line, and we need to exclude the
first two cases. This set can not be empty because, given any point~$(x,y)$
in~$V$, we can ensure $(r+t,r-t)\in V$ by choosing~$r$ in such a way that
$r+t\le x$ and~$r-t\le y$. Similarly it can not be~$\Q$ because, given a
point~$(x,y)$ not in~$V$, it suffices to choose~$r$ so that $r+t\ge x$
and~$r-t\ge y$. This proves that $f_V$ and~$g_V$ are well-defined
semilinear functions.

By the above, the function mapping $t\in\Q$ to the pair~$(f_V(t),g_V(t))$
is semilinear (i.e., the graph of this function is a semilinear relation), hence, again by quantifier elimination, we have
$-\infty=k_0<k_2 < \dotsb < k_n=\infty$ with $k_1\dotsc k_{n-1}\in\Q$ such that
for all~$0\le i < n$
\begin{align*}
f_V|_{]k_i,k_{i+1}[}(t) &= a_i t + b_i \\
g_V|_{]k_i,k_{i+1}[}(t) &= c_i
\end{align*}
for appropriate $a_i,b_i  \in \mathbb Q$ and~$c_i\in\{0,1\}$. 
We define the formulas $\phi_0, \phi_1,\dots,\phi_{n-1}$ as follows: for all $i \in \{1,\dots,n-2\}$ set 
\begin{align*}
\phi_i \; & := \;
y < f_V(k_{i})-k_{i} \; \vee \; (1-a_i) x + (1+a_i) y < 2b_i  \; \vee \; y < f_V(k_{i+1})-k_{i+1}
\end{align*}
if $g_V(\frac{k_i+k_{i+1}}{2})=0$, and otherwise
$$y < f_V(k_{i})-k_{i} \; \vee \; 
(1-a_i) x + (1+a_i) y \leq 2b_i 
 \; \vee \; y < f_V(k_{i+1})-k_{i+1}.$$
Moreover, 
\begin{align*}
\phi_0 & := \begin{cases}
y < a_0 x + b_0 \; \vee \; y < f_V(k_1)-k_1 & \text{ if } g_V(k_1-1)=0\\
y \leq a_0 x + b_0 \; \vee \; y < f_V(k_1)-k_1 & \text{ otherwise } 
\end{cases} \\
\phi_{n-1} & := \begin{cases}
y < f_V(k_{n-1})-k_{n-1} \; \vee \; y < a_n x + b_n  & \text{ if } g_V(k_{n-1}+1)=0\\
y < f_V(k_{n-1})-k_{1-1} \; \vee \; y \leq a_0 x + b_0 & \text{ otherwise.} 
\end{cases}
\end{align*}
For~$0<i<n$, we define
\begin{align*}
\psi_i \;:=\;
\begin{cases}
x < f_V(k_i)+k_i \; \vee \;
y < f_V(k_i)-k_i & \text{ if } g_V(k_i)=0 \\
x \leq f_V(k_i)+k_i \; \vee \;
y \leq f_V(k_i)-k_i & \text{ otherwise.}
\end{cases}
\end{align*}
Observe that the formulas $\psi_i$
are bends. 
We claim that the following is a bijunctive definition of~$V$.
\[
\Phi \;:=\; \bigwedge_{0\le i <n} \phi_i \, \wedge\, \bigwedge_{0<i<n}
\psi_i 
\]
To begin with, we need to show that the formulas~$\phi_i$ are actually
bends, namely that $-1\le a_i\le 1$. Assume $a_i < -1$, the other case is symmetric. Fix
$t$ such that $k_i<t<k_{i+1}$ and let $\epsilon$ denote a positive
rational chosen small enough that~$k_i<t-\epsilon<t+\epsilon<k_{i+1}$.
Consider the points
\begin{align*}
p_1 &:= \left(
(a_i+1) (t-\epsilon) + b_i + \frac{a_i+1}{2} \epsilon
,\,
(a_i-1) (t-\epsilon) + b_i + \frac{a_i+1}{2} \epsilon
\right)\\
p_2 &:= \left(
(a_i+1) (t+\epsilon) + b_i - \frac{a_i+1}{2} \epsilon
,\,
(a_i-1) (t+\epsilon) + b_i - \frac{a_i+1}{2} \epsilon
\right) 
\end{align*}
Clearly $p_1\in V$, because
\begin{align*}
(a_i+1) (t-\epsilon) + b_i + \frac{a_i+1}{2} \epsilon =
f_V(t-\epsilon) + (t-\epsilon) + \frac{a_i+1}{2} \epsilon <
f_V(t-\epsilon) + (t-\epsilon)
\\
(a_i-1) (t-\epsilon) + b_i + \frac{a_i+1}{2} \epsilon =
f_V(t-\epsilon) - (t-\epsilon) + \frac{a_i+1}{2} \epsilon <
f_V(t-\epsilon) - (t-\epsilon)
\end{align*}
and similarly $p_2\notin V$. Yet the coordinates of~$p_2$ are
componentwise strictly smaller than those of~$p_1$, contradicting the hypothesis
on~$V$.

It remains to show that $\Phi$ defines~$V$. By
inspection of the formula, $\Phi$ is equivalent to
\[
\Psi \;:=\;
\forall\, t\;: \; \begin{cases}
g_V(t)=0\, \rightarrow\, (\,x<f_V(t)+t \,\,\vee\,\, y<f_V(t)-t\,)\\
\quad\quad\quad\quad\wedge\\
g_V(t)=1\, \rightarrow\, (\,x\le f_V(t)+t \,\,\vee\,\, y\le f_V(t)-t\,)
\end{cases}
\]
Now, if $(x,y)\notin V$, considering $t=\frac{x-y}{2}$, we have $g_V(t) =0$, and since $f_V(t)+t \leq x$ and $f_V(t)-t \leq y$ by the definition of $f_V$  
it follows that
$\Psi$ must fail. Conversely, assume~$\neg\Psi(x,y)$ and let~$t$ witness
this. If $g_V(t)=0$, then $(f_V(t)+t,\,f_V(t)-t)\notin V$ hence, because
$f_V(t)+t\le x$, $f_V(t)-t\le y$, we get~$(x,y)\notin V$.
If $g_V(t)=1$, then
$f_V(t)+t< x$ and $f_V(t)-t< y$, so choosing a positive~$\epsilon$ such
that $f_V(t)+\epsilon+t< x$ and $f_V(t)+\epsilon-t< y$, we have
$(f_V(t)+\epsilon+t,\,f_V(t)+\epsilon-t)\notin V$ and~$(x,y)\notin V$ as
above.

\medskip 
\fbox{$(3)$ implies $(1)$.} 
Let $(u_1,v_1),(u_2,v_2),(u_3,v_3) \in R$
and let 
$$(u_0,v_0) := (\median(u_1,u_2,u_3),\median(v_1,v_2,v_3)) .$$ 
Let $\psi$ be a conjunct of a bijunctive definition
$\phi$ of $R$. 
We distinguish the following cases.  
\begin{itemize}
\item $\psi$ is a bound; in this case, preservation under the median operation is clear. 
\item $\psi$ is a linear inequality of the form $(ax + by) \circ c$. 
We first prove that $\psi$ is preserved when $\circ$ is $<$. 
Observe that, since 
$$\median(au_1,au_2,au_3) =
a \median(u_1,u_2,u_3)=au_0$$ 
there are at least two $i\in\{1,2,3\}$ such
that $au_0\le au_i$. Similarly there are at least two $i\in\{1,2,3\}$ such
that $bv_0\le bv_i$. By the pigeon hole principle we have at least one
value of~$i$ that satisfies both conditions simultaneously. For this
choice of~$i$
\[ au_0 + bv_0 \le au_i + bv_i < c \]
Hence, the median operation preserves all two-variable strict linear inequalities. 
It follows that the median operation also preserves 
all two-variable weak linear inequalities, 
since they are equivalent to infinite conjunctions
of two-variable strict linear inequalities:
$$\{(x,y \in {\mathbb Q}^2 \mid ax+by \leq c \} = \bigcap_{\epsilon>0} \{(x,y) \in {\mathbb Q}^2 \mid ax+by < c+\epsilon\}.$$
\item $\psi$ is a bend of the form $x \circ_1 d_1 \vee y \circ_2 d_2$. 
Then either 
\begin{itemize}
\item two out of $(u_1,v_1),(u_2,v_2),(u_3,v_3)$ satisfy $x \circ_1 d_1$, or 
\item two out of
$(u_1,v_1),(u_2,v_2),(u_3,v_3)$ satisfy $y \circ_2 d_2$. 
\end{itemize} 
In the first case we have that $(u_0,v_0)$
satisfies $x \circ_1 d_1$, in the second we
have that $(u_0,v_0)$
satisfies $y \circ_2 d_2$. 
\item $\psi$ is a bend of the form 
$x \circ_1 d_1 \vee (ax+by) \circ_2 d$. 
If two out of the three points $(u_i,v_i)$, for $i \in \{1,2,3\}$, satisfy $x \circ_1 d_1$, then $(u_0,v_0)$ also satisfies $x \circ_1 d_1$. 
So consider the case that $(u_1,v_1)$ and $(u_2,v_2)$ do not satisfy $x \circ_1 d_1$; then they must satisfy 
$(ax+by) \circ_2 d$. 
Note that $$(u_0,v_0) \in \{(u,v) \in {\mathbb Q}^2 \mid u_1 \leq u \leq u_2, v_1 \leq v \leq v_2\}$$
Suppose for contradiction that $(u_0,v_0)$ does not satisfy 
$(ax+by) \circ_2 d$. 
Then neither does
$(u_3,v_3)$; moreover, we claim that $(u_3,v_3)$ does not satisfy $x \circ_1 d_1$. 
Recall the assumption that $\circ_1 \in \{<,\leq\}$
if and only if $a > 0$. 
For concreteness, assume that $a>0$ and 
$b > 0$, and $u_1 \leq u_2$; the general case can be shown analogously. 
Set $x' := (d - b u_1) / a$ and note that 
$d_1 < x' \leq u_0$. Then we must have that 
$x' \leq u_3$ and hence $d_1 < u_3$. 
This is in contradiction to the assumption that 
$(u_3,v_3)$ satisfies $\psi$. 
\item $\psi$ is a bend of the form 
$(ax+by) \circ d \vee x \circ_2 d_2$; this case is similar to the previous one. 
\item $\psi$ is a bend of the form 
$(ax+by) \circ c \vee x \circ_1 d_1 \vee y \circ_2 d_2$ for $d_1,d_2 \in {\mathbb Q}$. 
If two points among $(u_1,v_1),(u_2,v_2),(u_3,v_3)$ satisfy the same clause in $\psi$,
then we can reduce to one of the previous two cases to deduce that $(u_0,v_0)$ satisfies $\psi$, too. Otherwise, we have that each of the points
$(u_1,v_1),(u_2,v_2),(u_3,v_3)$ satisfies precisely one of the three literals of $\psi$. Let $i_1,i_2,i_3 \in \{1,2,3\}$ be such that $(u_{i_1},v_{i_1})$ satisfies
 $(ax+by) \circ c$, $(u_{i_2},v_{i_2})$ satisfies
 $x \circ_1 d_1$, and $(u_{i_3},v_{i_3})$ satisfies
 $x \circ_2 d_2$. 
 Then the median 
 $(u_0,v_0)$ equals $(u_{i_1},v_{i_1})$ 
 and hence satisfies $\psi$. 
  
\end{itemize}
This concludes the proof of the implication $(3) \Rightarrow (1)$. 
\end{proof}

We remark that it was already known that a relation $R \subseteq D^k$ for a finite linearly ordered set $D$ is preserved by 
the median operation 
if and only if $R$ can be defined by a conjunction
of formulas of the form $x \circ_1 d_1 \vee y \circ_2 d_2$, for $\circ_1,\circ_2 \in \{\leq,\geq\}$ and $d_1,d_2 \in D$ (see~\cite{Disj,CCC}).
\red{Also note that Theorem~\ref{thm:syntax} implies
that existentially quantifying some variables in a bijunctive formula is equivalent to a (quantifier-free) bijunctive formula, because the existentially quantified formula is still preserved by the $\median$ operation and still semilinear.}

\section{Shostak's Theorem for Bends}
\label{sect:shostak}
Shostak's theorem 
characterises unsatisfiable TVPI constraints~\cite{Shostak}; we generalise it to a theorem characterising unsatisfiable conjunctions of bends.

\subsection{Composing bends}

\begin{lemma} \label{lem:compose-bends}
Let $\phi_1(x_0,x_1)$ and $\phi_2(x_1,x_2)$ be two bends where $x_1$ is a variable that is distinct from $x_0$ and $x_2$. Then 
$\psi := \exists x_1 (\phi_1(x_0,x_1) \wedge \phi_2(x_1,x_2))$ is equivalent to a bend $\phi(x_0,x_2)$. 
\end{lemma}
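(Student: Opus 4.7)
The plan is to combine a soft semantic observation with a direct Fourier--Motzkin elimination. Both $\phi_1$ and $\phi_2$ are preserved by $\median$ by the $(3)\Rightarrow(1)$ direction of Theorem~\ref{thm:syntax}, so their conjunction is; and since $\median$ commutes with existential projection, so is $\psi$. Because $\psi$ is also semilinear (first-order formulas in $({\mathbb Q};+,1,\leq)$ are closed under existential quantification), Theorem~\ref{thm:syntax} already shows that $\psi$ is equivalent to \emph{some} conjunction of bends. The content of the lemma is therefore that a single bend suffices.

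To establish this I would eliminate $x_1$ syntactically. Write $\phi_1 \equiv A \vee M \vee B$ and $\phi_2 \equiv A' \vee M' \vee B'$, where $A$ is the bound on $x_0$, $B$ and $A'$ are bounds on $x_1$, $B'$ is the bound on $x_2$, and $M, M'$ are the TVPI disjuncts. Distributing and pushing the quantifier inside gives $\psi$ as a disjunction of nine formulas $\exists x_1(\alpha \wedge \beta)$. Going through the $3\times 3$ grid: the row $\alpha = A$ contributes $A$ (the entry with $\beta = B'$ gives $A \wedge B'$, absorbed by $A$), and symmetrically the column $\beta = B'$ contributes $B'$. The pair $(B,A')$ is a conjunction of two bounds on $x_1$ and reduces to $\top$ (in which case $\psi \equiv \top$, a trivial bend) or to $\bot$. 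Each of $(M,A')$ and $(B,M')$ is either $\top$ (again making $\psi \equiv \top$) or yields a bound $C$ on $x_0$, respectively $C'$ on $x_2$, by combining the $x_1$-bound from the TVPI with the $x_1$-bound. The pair $(M,M')$ is either $\top$ or yields a two-variable linear inequality $T$ in $x_0, x_2$ by Fourier--Motzkin, according as the $x_1$-coefficients of $M$ and $M'$ have the same or opposite signs. Parallel bounds on the same variable in the outer disjunction collapse to the weaker one; opposite bounds on the same variable force $\psi \equiv \top$. In the surviving non-trivial case what remains is a disjunction of at most one bound on $x_0$, one TVPI, and one bound on $x_2$: the syntactic shape of a bend.

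The main obstacle is verifying that the resulting formula satisfies the sign conventions of Definition~\ref{def:bend}. This is a finite case analysis over the sign patterns of $a_1, a_2, a'_1, a'_2$ and the corresponding orientations of $A, B, A', B'$; the key fact, traceable through each Fourier--Motzkin step, is that the conventions imposed on $\phi_1$ and $\phi_2$ force the orientations of both $A$ and $C$ to match the sign of the $x_0$-coefficient of $T$, and symmetrically for $B'$, $C'$, and the $x_2$-coefficient. Beyond this sign-bookkeeping, the argument is routine.
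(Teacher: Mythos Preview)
Your approach is correct and essentially the same as the paper's: distribute the conjunction over the nine disjunct pairs, push the existential quantifier inside, eliminate $x_1$ by Fourier--Motzkin in each case, and then verify that the surviving disjunction satisfies the sign conventions of Definition~\ref{def:bend}. The paper carries this out explicitly (fixing $b_1<0$ by symmetry and writing down the resulting bend with its coefficients and comparison symbols), whereas you sketch it more abstractly; your preliminary appeal to Theorem~\ref{thm:syntax} is harmless but, as you yourself note, does not by itself give a \emph{single} bend, so the real work is the case analysis you outline.
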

\begin{proof}
Suppose that for $i=1$ and $i=2$, the bend $\phi_i$ has the form 
$$x_{i-1} \circ_{i,1} d_{i,1} \vee (a_i x_{i-1} + b_i x_i \circ_i c_i) \vee x_i \circ_{i,2} d_{i,2}.$$
If one of $\phi_1$ or $\phi_2$ is equivalent to false, then so is $\psi$, so let us suppose that this is not the case. 
The formula 
$\phi_1 \wedge \phi_2$ is equivalent to
\begin{align}
& (x_0 \circ_{1,1} d_{1,1}  
\wedge x_1 \circ_{2,1} d_{2,1}) \label{eq:compose-bends} \\
\vee \; & (x_0 \circ_{1,1} d_{1,1}  \wedge
(a_2 x_1 + b_2 x_2 \circ_2 c_2))  \\
\vee \; & (x_0 \circ_{1,1} d_{1,1}  \wedge
x_{2} \circ_{2,2} d_{2,2})  \\
\vee \; & 
((a_1 x_{0} + b_1 x_1 \circ_1 c_1) \wedge x_1 \circ_{2,1} d_{2,1}) \\
\vee \; & 
((a_1 x_2 + b_1 x_1 \circ_1 c_1) \wedge (a_2 x_1 + b_2 x_2 \circ_2 c_{2}) ) \label{eq:main-disjunct} \\
\vee \; & 
((a_1 x_{0} + b_1 x_1 \circ_1 c_1)  \wedge x_{2} \circ_{2,2} d_{2,2}) \\
\vee \; & (x_1 \circ_{1,2} d_{1,2} \wedge 
x_{1} \circ_{2,1} d_{2,1}) 
\\
 \vee \; & (x_1 \circ_{1,2} d_{1,2} \wedge 
 (a_2 x_{1} + b_2 x_{2} \circ_{2} c_{2})) \\
\vee \; & (x_1 \circ_{1,2} d_{1,2}\wedge x_{2} \circ_{2,2} d_{2,2}) \label{eq:last-disjunct}
\end{align}

Assume that $b_1 < 0$; the case that $b_1>0$ is symmetric.
If $a_2 < 0$ then $\psi$ is equivalent to true because of disjunct~(\ref{eq:main-disjunct}).
Otherwise, $a_2 > 0$, and 
$\circ_{1,2} \in \{\geq,>\}$ and 
$\circ_{2,1} \in \{\leq,<\}$ by the definition of bends. 
Note that $a_2 x_1 + b_2 x_{2} \circ_{2} c_{2}$ is equivalent to $x_{1} \circ_{2} (c_{2} - b_2 x_{2})/a_2$. 
If additionally $a_1 x_{0} + b_2 x_{1} \circ_{1} c_{1}$, then $(c_1 - a_1 x_{0})/b_1 \circ_1 x_{1}$ since $b_1 < 0$. 
Hence, 
$$(c_1 - a_1 x_{0})/b_1 \circ (c_{2} - b_2 x_{2})/a_2$$ where $\circ \in \{<,\leq\}$ is strict if and only if one of $\circ_1,\circ_{2}$ is strict. 
 This is equivalent to 
 $$b_1 (c_{2} - b_2 x_{2}) \circ a_2 (c_1 - a_1 x_{0})$$
 which simplifies to
 $$a_1 a_2 x_{0} - b_1 b_2 x_{2} \circ a_2 c_1 -  b_1 c_{2} .$$ 
We use that existential quantification distributes over disjunction, and 
obtain that $\psi$ 
is equivalent to the bend
\begin{align}
x_{0} \circ_{1,1} d_{1,1} & \vee x_0 \circ_3 (c_1 - b_1 d_{2,1})/a_1 \label{eq:first} \\
& \vee (a_1 a_2 x_{0} - b_1 b_2 x_{2} \circ a_2 c_1 -  b_1 c_{2}) \\
x_{2} \circ_{2,2} d_{2,2} & \vee x_2 \circ_4 (c_2 - a_2 d_{1,2})/b_2 
\end{align}
where 
\begin{itemize}
\item $\circ_3 \in \{\leq,<\}$ if $a_1 > 0$ and $\circ_3 \in \{\geq,>\}$ if $a_1 < 0$; 
\item $\circ_3$ is strict if and only if at least one of $\circ_1,\circ_{2,1}$ is strict; 
\item $\circ_4 \in \{\leq,<\}$ if $b_2 > 0$ and $\circ_3 \in \{\geq,>\}$ if $b_2 < 0$; 
\item $\circ_4$ is strict if and only if at least one of $\circ_2,\circ_{1,2}$ is strict.  
\end{itemize}
\blue{We can now remove redundant bounds in the disjunction and obtain a bend.}
\end{proof}
The bend $\phi(x_0,x_2)$
in Lemma 4.1 is called the  \emph{residue bend} of $\exists x_1 (\phi_1(x_0,x_1) \wedge \phi_2(x_1,x_2))$. Note that if $\alpha(x_0) \vee \psi(x_0,x_2) \vee \beta(x_2)$ is the residue bend of $\psi(x_0,x_2)$ for distinct variables $x_0$ and $x_2$ then $\alpha$ 
is either the first or the second disjunct in (\ref{eq:first}). Then we define the \emph{source of $\alpha$ with respect to $(\phi_1,\phi_2)$} to be 
$x_0 \circ_{1,1} d_{1,1}$  in the first case  and $x_1 \circ_{2,1} d_{2,1}$ in the second case. 
The source of $\beta$ is defined analogously. 
Note that $\psi(x_0,x_2)$ only depends on the two-variable inequalities of $\phi_1$ and of $\phi_2$.

\subsection{Walks, Paths, and Cycles}
A \emph{walk} (from $x_0$ to $x_k$) is a sequence of bends 
$$W = \big (\phi_1(x_0,x_1),\phi_2(x_1,x_2),\dots,\phi_k(x_{k-1},x_k)\big).$$ We write $|W| := k$ for the length of $W$.
For a formula $\phi(x_1,x_2)$ we write
$\phi^{-1}(x_1,x_2)$ for the formula $\phi(x_2,x_1)$, and we write $W^{-1}$ for 
the walk $\big (\phi^{-1}_k(x_k,x_{k-1}),\dots,\phi^{-1}_k(x_1,x_0)\big)$. 
We make the convention that none of the
entries of $W$ is equivalent to a one-variable bend, unless $k=0$. 
If the variables $x_0,\dots,x_k$ are pairwise distinct, then $W$ is called a \emph{path}. 
If $W$ is a path 
then the \emph{residue} of $W$
is the formula $\res_W(x_0,x_k) := \exists x_1,\dots,x_{k-1} (\phi_1(x_0,x_1) \wedge \cdots \wedge \phi_k(x_{k-1},x_k))$ which is equivalent
to a bijunctive formula. 
A walk $W$ is \emph{closed} if $x_0 = x_k$.
We do allow the case that $k=0$, in which case the closed walk consists of a single bend $\psi(x_0,x_0)$. 
We call $W$ a \emph{cycle}
if the variables
$x_0,\dots,x_{k-1}$ are pairwise distinct.
If $W$ is a cycle then the \emph{residue} of $W$
is the formula $\res_W(x_0) := \exists x_1,\dots,x_{k-1} (\phi_1(x_0,x_1) \wedge \cdots \wedge \phi_k(x_{k-1},x_0))$.  
If $W$ is a cycle or a path, then inductively by Lemma~\ref{lem:compose-bends} the formula
\begin{align}
\exists x_1,\dots,x_{k-1} \bigwedge_{i \in \{1,\dots,k\}} \phi_i
\label{eq:residue}
\end{align}
is equivalent to a bend, which we call the
\emph{residue bend of $W$}, and denote
by $\res_W(x_0,x_k)$ in case where $W$ is a path, and by $\res_W(x_0)$ in case where $W$ is a cycle. 
Note that the residue bend of $W$ can be computed in polynomial time in the representation size of $W$ (all numbers appearing in the bends are represented in binary). 
The residue bend of a cycle $C$ starting and ending in $x$ is equivalent to $x \circ_1 d_1 \vee x \circ_2 d_2$ where $\circ_1 \in \{\leq,<\}$ and $\circ_2 \in \{\geq,>\}$, and $d_1,d_2 \in {\mathbb Q} \cup \{-\infty,+\infty\}$.

If the path or cycle $W$ consists of TVPI constraints, then the residue bend is either true, a bound, 
or again a TVPI constraint;
we then call it the \emph{residue inequality}. 


\begin{proposition}[Shostak~\cite{Shostak}]\label{thm:shostak}
A TVPI instance is unsatisfiable if and only if 
it contains a path $P$ 
from $x_0$ to $x_k$ and two cycles
with residue inequalities $\alpha(x_0)$ and
$\beta(x_1)$  
such that $\alpha(x_0) \wedge \res_P(x_0,x_k) \wedge \beta(x_k)$ is unsatisfiable. 
\end{proposition}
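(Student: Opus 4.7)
The plan is to establish the two directions separately. For ``$\Leftarrow$'' the argument is immediate from soundness: by iterating Lemma~\ref{lem:compose-bends}, each of $\alpha(x_0)$, $\res_P(x_0,x_k)$, and $\beta(x_k)$ is a logical consequence of a conjunction of constraints taken from the instance (with the internal variables of the path and cycles existentially quantified away). Any assignment satisfying the instance must satisfy all three, so their joint unsatisfiability forces unsatisfiability of the whole instance.

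For ``$\Rightarrow$'' I would use Fourier--Motzkin elimination restricted to TVPI. By the TVPI case of Lemma~\ref{lem:compose-bends}, eliminating a variable shared by two TVPI constraints produces another TVPI constraint, a bound, or $\perp$. By completeness of Fourier--Motzkin over ${\mathbb Q}$, if the instance is unsatisfiable one can derive $\perp$ via iterated eliminations. I would attach to each derived inequality its \emph{provenance}, namely a walk in the original instance: composing two inequalities along a shared variable is recorded as concatenating the underlying walks at that vertex. Thus binary derived residues correspond to walks between two variables, and one-variable residue bounds correspond to closed walks.

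The combinatorial step then turns walks into paths and closed walks into cycles. Whenever an intermediate variable $y$ is visited twice in a walk, I would excise the portion between the two visits as a separate closed walk at $y$; the residue of this closed walk is again a consequence of the instance and can be recorded as an independent cycle, while the remaining shortened walk still carries the information needed to produce a contradiction. Iterating, the witness to unsatisfiability reorganises as a path $P$ from some $x_0$ to some $x_k$ together with a cycle at $x_0$ and a cycle at $x_k$, whose combined residues $\alpha(x_0) \wedge \res_P(x_0,x_k) \wedge \beta(x_k)$ are unsatisfiable. Degenerate cases (for instance $k=0$, or a single closed walk already inconsistent on its own) fit into the same format by collapsing the path or letting one of the cycles be trivial.

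The main obstacle is not the walk-to-path reduction but the bookkeeping of strictness and of the orientation and sign conventions from Definition~\ref{def:bend}: strict and weak inequalities compose in subtle ways, and the fact that a TVPI inequality $ax+by \circ c$ carries a signed orientation determines which end of an edge can be concatenated with which end of a neighbour. Handling these cases properly amounts to essentially the same case analysis as inside the proof of Lemma~\ref{lem:compose-bends}, and that is where the real technical content of the proposition lies.
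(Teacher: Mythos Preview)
Your ``$\Leftarrow$'' direction is fine. For ``$\Rightarrow$'' you propose a genuinely different route from the paper's --- Fourier--Motzkin with provenance tracking rather than direct solution-building --- but the step you dismiss as routine is exactly where the argument lives, and your sketch does not carry it out.

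Concretely: when a closed walk $W$ at $x$ revisits some vertex $y$, write $W = A \cdot B \cdot C$ with $B$ closed at $y$. The residue $\res_W(x)$ factors through $\res_B(y)$, and the shortened walk $A \cdot C$ in general carries a strictly weaker residue than $W$ (take $A = (x \le y)$, $B$ with residue $y \le 0$, $C = (y \le x)$: then $\res_W(x)$ is $x \le 0$ while $\res_{A C}(x)$ is trivial). So ``the remaining shortened walk still carries the information needed to produce a contradiction'' is false as stated. What one actually has to argue is that among the several possible recombinations of $A$, $B$, $C$ with the other pieces of the witness, at least one is still unsatisfiable and strictly shorter; the paper isolates essentially this step as Lemma~\ref{lem:cycle}. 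Moreover, iterated excision produces many recorded sub-cycles, not two; collapsing down to exactly one path flanked by two cycles requires a minimality/absorption argument that your outline does not supply.

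The paper's proof (Appendix~\ref{sect:proof-Shostak}) avoids Fourier--Motzkin altogether. Lemma~\ref{lem:shostak} builds a satisfying assignment variable by variable under the hypothesis that no handcuff refutation exists, and a separate shortest-counterexample argument then converts an arbitrary handcuff refutation into one whose image is an honest path and two honest cycles. Neither step needs the strictness/sign case analysis you flag as the main obstacle --- that bookkeeping is already absorbed into Lemma~\ref{lem:compose-bends} and plays no further role. You have located the difficulty in the wrong place.
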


{\bf Note.} The terminology above is standard in graph theory today. Shostak's original formulation 
uses the terminology differently: his cycles are our closed walks, his simple cycles are our cycles,
his paths are our walks, and his simple paths are our paths. Moreover, his formulation 
says ``a TVPI instance is unsatisfiable iff its closure has an admissible simple cycle with unsatisfiable residue inequality'', where the closure is defined by adding all residue inequalities coming from (simple) cycles. The difference is due to the fact that in Shostak there is a pre-processing step, transforming all paths into cycles by the addition of one dummy variable. 
The admissibility requirements are made to avoid the situation where the residue inequality is a tautology (i.e., $x > -\infty$). We do not need these complications. We include  in Appendix~\ref{sect:proof-Shostak} a proof of Shostak's theorem for the convenience of the reader. 

For TVPI constraints, the notion of residue inequality can also be defined for walks. 
But note that if $W$ is just a walk of bends rather than a cycle or a path of bends, then the formula in~$(\ref{eq:residue})$ might no longer be equivalent to a single bend: consider
for example the closed walk $$(x \leq 0 \vee u \leq 1, u \geq 2 \vee x \geq 1, x \leq 2 \vee v \leq 1, v \geq 2 \vee x \geq 3)$$
which is equivalent to 
$(x \leq 0 \vee x \geq 1, x \leq 2 \vee x \geq 3)$. 
This formula 
is clearly not equivalent to a single bend. 
Note that a naive generalisation of Shostak's theorem fails for bends, as we see in the following example.

\begin{expl}\label{expl:worst-case}
Consider the bijunctive formula
$$x > 0 \wedge y < n \wedge \bigwedge_{i \in \{0,\dots,n\}} (x \geq i \vee y \leq i) \wedge x < z \wedge z < y$$
It is unsatisfiable since $$x < y \wedge x \geq 0 \vee y \leq 0 \wedge x \geq 1 \vee y \leq 1, \dots, x \geq n \vee y \leq n$$ 
implies $x \geq 0 \vee y \leq n$, so $\Phi$ is unsatisfiable because of its conjuncts $x > 0$ and $y < n$.
But removing any bend in the formula above 
results in a satisfiable formula, so there cannot
be simple cycles and a simple path as in the statement of Shostak's theorem witnessing unsatisfiability of $\Phi$. 
\end{expl} 

\subsection{Handcuffs}
For our generalisation of Shostak's theorem for bends the following terminology is convenient. 

\begin{definition}[handcuff]
Let $C$ be a cycle starting and ending in $x$, 
let $P$ be a path from $x$ to $y$,
and let $D$ be a cycle starting and ending in $y$,
such that $C$ and $D$ have disjoint sets of variables, $x$ is the only variable shared by $C$ and $P$, and $y$ is the only variable shared by $P$ and $D$. 
Then the walk $(C,P,D)$ is called a \emph{handcuff}. 
We say that the handcuff is \emph{unsatisfiable} if the conjunction over all bends in the handcuff is unsatisfiable. 
\end{definition}
\begin{figure}[h]
  \begin{center}
  \includegraphics[scale=0.5]{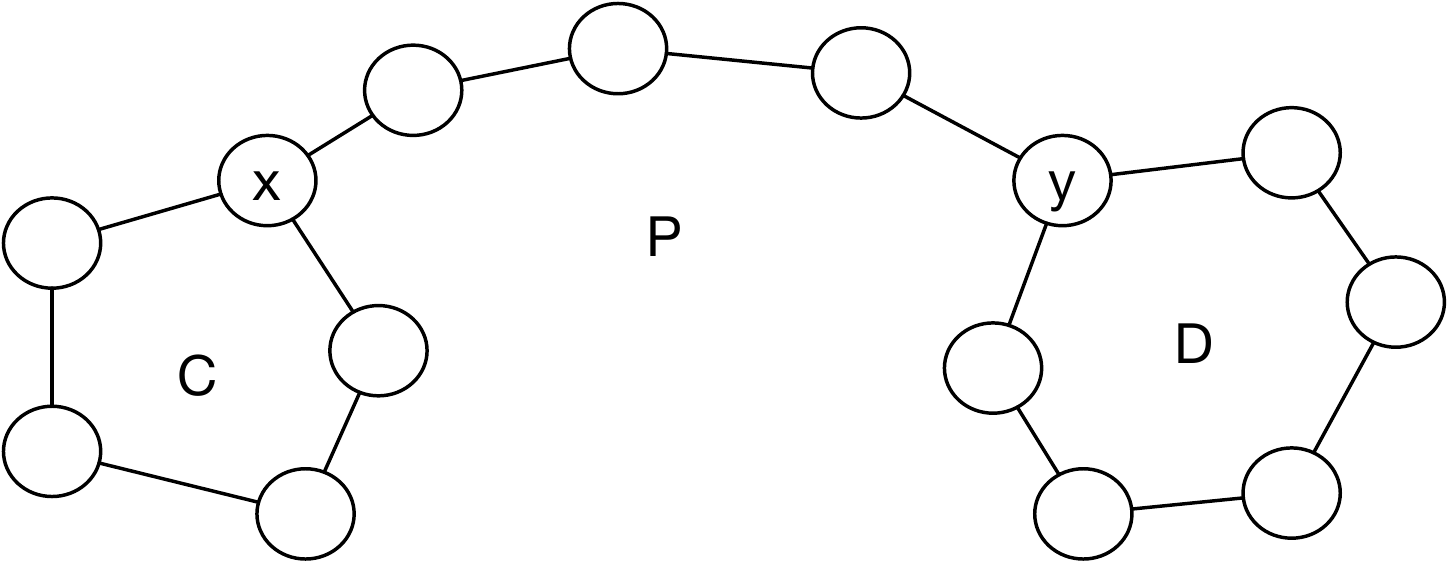}
  \end{center}
\caption{An illustration of a handcuff $(C,P,D)$.}
  \label{fig:handcuff}
\end{figure}
Note that a handcuff $(C,P,D)$ is unsatisfiable
if and only 
if $\res_C(x) \wedge \res_P(x,y) \wedge \res_D(y)$
is unsatisfiable. 
If $\Phi$ is a bijunctive formula with variables $V$  and $W$ is a walk with variables $U$ then a function 
$h \colon U \to V$ is called a \emph{homomorphism} if  for every bend $\phi(x,y)$ that appears in $W$ the bend $\phi(r(x),r(y))$
is a conjunct of $\Phi$.

\begin{definition}[handcuff refutation]
Let $\Phi$ be a bijunctive formula with $n$ variables.  A \emph{handcuff refutation of $\Phi$} 
is a homomorphism from an unsatisfiable handcuff $(C,P,D)$ to $\Phi$. 
\end{definition}

Clearly, if a bijunctive formula has a handcuff refutation, it is unsatisfiable. 
The converse is false, as demonstrated in Example~\ref{expl:worst-case}. 
\red{Also note that Shostak's theorem implies that a TVPI instance $\Phi$ is unsatisfiable if and only if it has a handcuff refutation.
When $\phi$ is a conjunct of $\Phi$
and $\beta$ is another formula, we write
$\Phi[\phi/\beta]$ for the formula obtained
from $\Phi$ by replacing $\phi$ by $\beta$.}

\begin{definition}[handcuff consistency]
Let $\Phi$ be a bijunctive formula with variables $V$. 
We say that $\Phi$ is \emph{handcuff consistent} 
if \red{for every bound $\beta$ 
from a bend $\phi \in \Phi$ the formula
$\Phi[\phi/\beta]$ does not have a handcuff refutation. }
\end{definition}

\begin{theorem}[Shostak's theorem for bends]
\label{thm:bend-shostak}
Let $\Phi$ be a bijunctive formula which is handcuff consistent. Then $\Phi$ is satisfiable. 
\end{theorem}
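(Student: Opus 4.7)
The plan is to argue by induction on the number of variables of $\Phi$, constructing a satisfying assignment variable by variable, in the spirit of the algorithm developed in Section~\ref{sect:alg}.

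For each variable $x$, I would first collect the one-variable implied constraints arising as residue bends of cycles through $x$ in $\Phi$; this yields a feasibility set $I_x \subseteq {\mathbb Q}$ which is a finite intersection of complements of open intervals together with bounded intervals. By handcuff consistency I would show that $I_x$ is nonempty: otherwise two cycles through $x$ with jointly unsatisfiable combined residue could be glued together with any bend of $\Phi$ touching a second variable $y$ into an unsatisfiable handcuff, and after strengthening a single bend of $\Phi$ to one of its bound disjuncts one would obtain a handcuff refutation, contradicting handcuff consistency.

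Next I would pick a variable $x_0$ and a value $v_0 \in I_{x_0}$, and form the substituted formula $\Phi'$ on the remaining variables, where each bend of $\Phi$ involving $x_0$ is simplified: disjuncts purely about $x_0$ become $\top$ or $\perp$, and disjuncts of the form $a x_0 + b y \circ c$ simplify to bounds on $y$. The result is again bijunctive. By induction on the number of variables, it suffices to show that for an appropriate choice of $v_0$ the formula $\Phi'$ is again handcuff consistent; the induction then produces a satisfying assignment of $\Phi'$ which extends to $\Phi$ via $x_0 \mapsto v_0$.

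The crux of the argument, and the main obstacle, is thus to establish that $v_0 \in I_{x_0}$ can be chosen so that $\Phi'$ is handcuff consistent. I would prove this by contrapositive: given any handcuff refutation of a single-step strengthening $\Phi'[\phi'/\beta']$, I would lift it to a handcuff refutation of some single-step strengthening $\Phi[\phi/\beta]$, using Lemma~\ref{lem:compose-bends} in reverse to re-expand any bend of $\Phi'$ that arose from eliminating $x_0$ into a walk in $\Phi$ through $x_0$. Choosing $v_0$ to be an extreme point of $I_{x_0}$, for instance the lower endpoint when it exists, should ensure that every lifted refutation can be attributed to a single strengthening of one bend in $\Phi$ rather than two independent strengthenings, thereby preserving handcuff consistency across the substitution step.
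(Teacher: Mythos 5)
Your approach is genuinely different from the paper's: you try to prove the theorem directly by fixing variable values one at a time, whereas the paper first reduces a handcuff-consistent bijunctive formula to a TVPI instance (by iteratively replacing bends by bound literals while preserving the absence of handcuff refutations) and only then invokes Shostak's theorem (Lemma~\ref{lem:shostak}) for TVPI. Your strategy more closely parallels the paper's proof of Lemma~\ref{lem:shostak} itself, but you are trying to run that inductive construction directly in the presence of disjunctions, which is exactly the complication the paper's reduction is designed to avoid. Unfortunately there are two genuine gaps.

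First, the argument that $I_{x_0}$ is nonempty does not work as stated. The residue bend of a cycle at $x$ always has the shape $x \circ_1 d_1 \vee x \circ_2 d_2$ with $\circ_1 \in \{\leq,<\}$ and $\circ_2 \in \{\geq,>\}$, i.e., it is $\top$, a single bound, or the complement of an open interval. Two constraints of the genuine disjunctive form are never jointly unsatisfiable (any sufficiently small $x$ satisfies both), so ``two cycles through $x$ with jointly unsatisfiable combined residue'' simply do not exist unless both residues degenerate to a single bound, in which case you are back in the TVPI situation. In general, $I_{x_0}$ can be empty only via an interplay of many cycles and bounds that cannot be captured by one handcuff; this is precisely the phenomenon illustrated by Example~\ref{expl:worst-case}, where unsatisfiability is not witnessed by a single handcuff. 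To get a handcuff refutation you must first strengthen a bend to one of its bound literals, and which bend and which literal to pick requires a careful argument that your sketch does not supply.

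Second, the crux — choosing $v_0 \in I_{x_0}$ so that the substituted formula $\Phi'$ is again handcuff consistent — is asserted, not proved, and the proposed lifting mechanism does not fit the situation. When you substitute $x_0 := v_0$, a bend $\phi(x_0,y)$ collapses to a one-variable bend on $y$ (a disjunction of two bounds); this is a numeric instantiation, not an existential elimination, so ``Lemma~\ref{lem:compose-bends} in reverse'' does not apply: to re-expand a substituted unary constraint you would need bounds $x_0 \geq v_0$ and $x_0 \leq v_0$ that are not conjuncts of $\Phi$, so the lifted structure would not map homomorphically into $\Phi$. Moreover, a handcuff refutation of $\Phi'$ can use unary constraints coming from several different bends of $\Phi$ involving $x_0$; re-expanding all of them produces a walk visiting $x_0$ repeatedly, which is no longer a handcuff. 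The claim that taking $v_0$ to be an extreme point of $I_{x_0}$ fixes all of this is plausible-sounding but unsupported, and it is exactly the point where the proof has to do real work. The paper's proof sidesteps all of these issues: the ``Claim'' in the paper's proof shows that removing the offending bound literal from a bend preserves the absence of handcuff refutations, and the process terminates at a TVPI instance, to which the already-established Lemma~\ref{lem:shostak} applies.
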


\begin{proof}
Let $V = \{x_1,x_2,\dots,x_n\}$ be the variables of $\Phi$. 
\red{Suppose first that $\Phi$ contains a bend $\phi$ which
is not a TVPI constraint; so the bend must have
a literal which is a bound $\beta(x)$. 
Since $\Phi$ is handcuff consistent, the  instance $\Psi := \Phi[\phi/\beta]$ does not have a handcuff refutation. 
However, $\Psi$ might not be handcuff consistent, i.e., $\Psi$ might have a bend $\psi(x,y)$ with a bound $\chi(x)$ as literal such that $\Psi[\psi/\chi]$ has a handcuff refutation;
we can assume that
this refutation is a homomorphism from 
an unsatisfiable handcuff of the form $(E,Q,\chi)$. 
Let $\psi'$ be obtained from $\psi$ by
removing $\chi$, and let $\Psi' := \Psi[\psi/\psi']$.} 

\red{{\bf Claim.} $\Psi'$ does not have a handcuff refutation. 
Suppose that otherwise there is an unsatisfiable handcuff $(C,P,D)$
with a homomorphism to $\Psi$. 
Since $\Psi$ does not have a handcuff refutation,
we may assume that $\psi'(x,y)$ appears in either $P$ or in $C$. 
In the first case, $P$ can be written 
as $(P_1,\psi',P_2)$. 
But then $(E,(Q,\psi,P_2),D)$ is an unsatisfiable
handcuff and homomorphically maps to 
$\Psi$, contrary to the assumption that $\Psi$
does not contain a handcuff refutation. See
Figure~\ref{fig:shostak} for an illustration. 
In the second case, $C$ can be written 
as $(C_1,\psi',C_2)$. 
But then $(E,(Q,\psi,C_2,P),D)$ is an unsatisfiable
handcuff and homomorphically maps to 
$\Psi$, contrary to the assumption that $\Psi$
does not contain a handcuff refutation.}


\red{Again, $\Psi'$ might not be handcuff consistent, and we continue to remove literals from bends. In this way, we eventually end up with a handcuff-consistent bijunctive formula which is equivalent to $\Psi$. If this formula contains a bend with more than one literal,
we continue as above, until
eventually we obtain a TVPI instance.
The statement then follows from Shostak's theorem (Theorem~\ref{thm:shostak}; actually, Lemma~\ref{lem:shostak} suffices).}
\end{proof}

\begin{figure}[h]
  \begin{center}
  \includegraphics[scale=0.6]{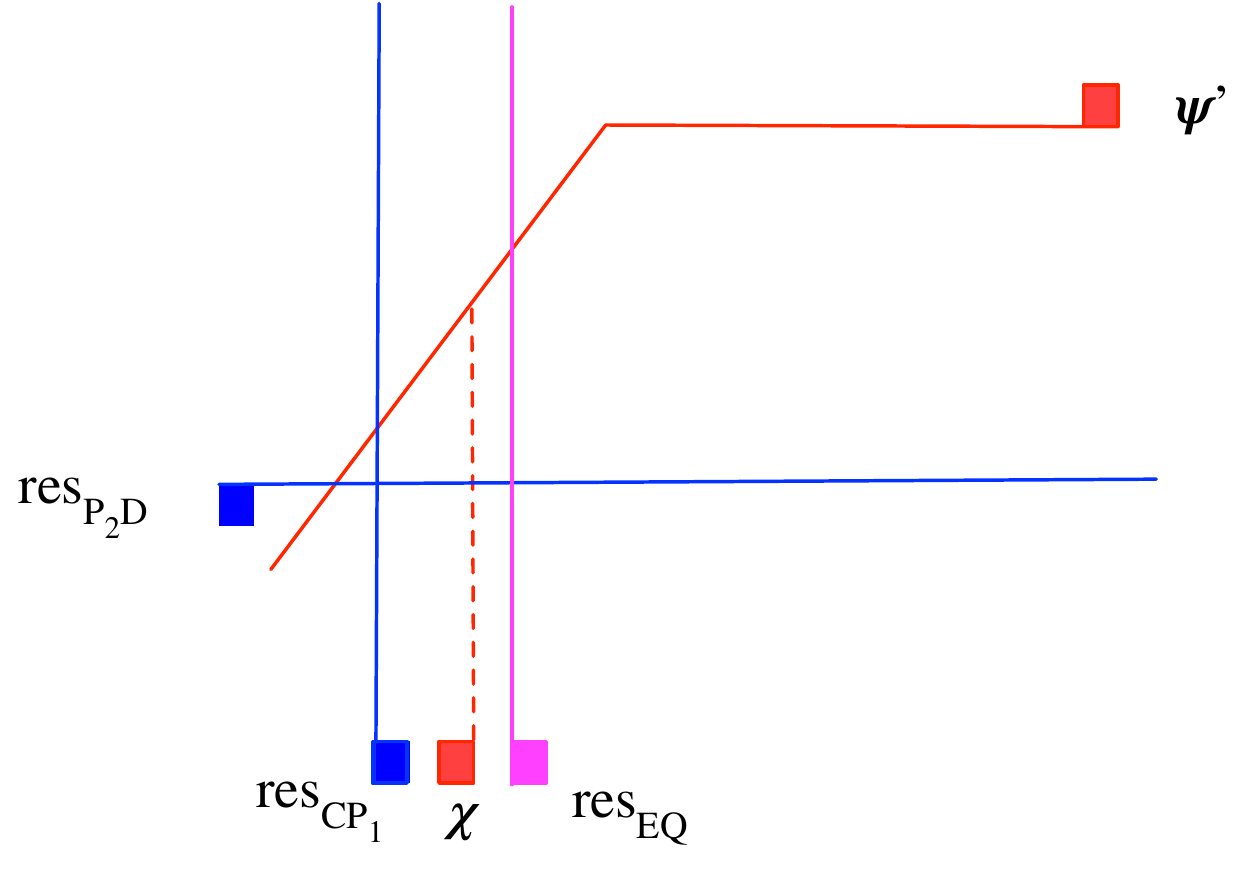}
  \end{center}
\caption{Illustrations for the proof of Theorem~\ref{thm:bend-shostak}.}
  \label{fig:shostak}
\end{figure}

\subsection{Small Handcuff Refutations}
\red{Note that there is no length restriction on
the size of the unsatisfiable handcuff in a handcuff refutation. To apply Theorem~\ref{thm:bend-shostak} to prove the correctness of our algorithm, we need such a bound. For this purpose, we first prove 
a strengthening of the contraposition of this theorem in an important special case (Lemma~\ref{lem:cycle}).} 
Let $\alpha$ be a bound in the residue bend of a path $P$. 
 The \emph{source of $\alpha$ with respect to $P$} is defined inductively:
 \begin{itemize}
\item if $P$ consists of just one bend,
then the residue inequality equals this bend, and the source of $\alpha$ is defined to be $\alpha$; 
\item 
if $P$ consists of just two bends we have already defined the source of $P$;
\item otherwise, suppose that 
$P = (Q,\phi(x_{k-1},x_k))$ 
is a path from $x_0$ to $x_k$ 
and let $\psi(x_0,x_{k-1})$ be the residue bend of $Q$.
Then the residue bend of $P$
equals the residue bend of $\exists x_{k-1} (\psi(x_0,x_{k-1}) \wedge \phi(x_{k-1},x_k))$. If the source of $\alpha$ with respect to $(\psi,\phi)$ lies in $\psi$, then we define inductively the source of $\alpha$ to be the source of $\psi$ with respect to $Q$. Otherwise, the source $\beta$ of $\alpha$ with respect to $(\psi,\phi)$ lies in $\phi$, 
and we define the source of $\alpha$ with respect to $P$ to be $\beta$, too. 
\end{itemize}


\begin{lemma}\label{lem:cycle}
Let $C = (\phi_0(x_0,x_1),\phi_1(x_1,x_2),\dots,\phi_n(x_n,x_0))$ be a cycle of bends
and let $\Phi := \phi_0(x_0,x_1) \wedge \dots \wedge \phi_n(x_n,x_0)$. 
Let $\alpha$ and $\beta$ be a lower
and upper bound such that 
$\alpha(x_0) \wedge \Phi \wedge \beta(x_0)$
is unsatisfiable. Then 
$\Phi \wedge \beta(x_0)$ or $\Phi \wedge \beta(x_0)$ is not handcuff consistent.
\end{lemma}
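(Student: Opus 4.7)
The plan is to exploit the residue bend of the cycle $C$ at the distinguished variable $x_0$. By iterated application of Lemma~\ref{lem:compose-bends}, the formula $\exists x_1,\dots,x_n \bigwedge_{i} \phi_i$ is equivalent to a bend in the single variable $x_0$, which therefore has the form $(x_0 \circ_1 d_1) \vee (x_0 \circ_2 d_2)$ with $\circ_1 \in \{\leq,<\}$ and $\circ_2 \in \{\geq,>\}$. Assuming without loss of generality that $\alpha(x_0) \wedge \beta(x_0)$ is satisfiable (otherwise the claim is immediate from a short handcuff built from $\alpha$ and $\beta$ alone), unsatisfiability of $\alpha \wedge \Phi \wedge \beta$ forces each disjunct of the residue separately to clash with the corresponding bound: $\alpha \wedge (x_0 \circ_1 d_1)$ is unsatisfiable and $\beta \wedge (x_0 \circ_2 d_2)$ is unsatisfiable.

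Next, I would trace each residue bound back to its source. Fix the $\alpha$-side and let $\gamma_1$ denote the source of $x_0 \circ_1 d_1$; by the inductive definition of source in Section~\ref{sect:shostak}, it is a bound literal in some bend $\phi_{i_1}$ on some variable $x_{j_1} \in \{x_{i_1}, x_{(i_1+1) \bmod (n+1)}\}$. Using the observation that the TVPI disjunct of a composed bend depends only on the TVPI parts of its inputs, $\gamma_1$ combined with the TVPI literals of the bends along a specific direction of the cycle from $x_{j_1}$ to $x_0$ implies precisely $x_0 \circ_1 d_1$; the analogous statement holds for the source $\gamma_2$ of $x_0 \circ_2 d_2$.

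The candidate handcuff refutation for the formula $(\Phi \wedge \alpha(x_0))[\phi_{i_1}/\gamma_1]$ is then the triple $(C_h, P_h, D_h)$ whose left cycle $C_h$ is $\alpha$ viewed as a zero-length self-loop at a walk vertex $u_\alpha$ with $h(u_\alpha) = x_0$, whose right cycle $D_h$ is $\gamma_1$ viewed as a zero-length self-loop at $u_\gamma$ with $h(u_\gamma) = x_{j_1}$, and whose path $P_h$ traces the remaining bends of $C$ in the direction of propagation. One then verifies unsatisfiability of the walk conjunction: the TVPI disjuncts of the path bends combined with $\gamma_1$ force $x_0 \circ_1 d_1$, which clashes with $\alpha$. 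Whenever this derivation is blocked because some path bend is satisfied via an alternative bound disjunct, that disjunct itself must clash with bounds already accumulated along the walk, yielding a strictly shorter handcuff refutation; this self-reducing argument is naturally phrased as an induction on the length of the cycle.

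The main obstacle I foresee is the degenerate case in which the source variable $x_{j_1}$ coincides with $x_0$. Then $\gamma_1$ is already a bound on $x_0$ and $\alpha \wedge \gamma_1$ is directly unsatisfiable, but nominally placing both zero-length cycles at $x_0$ violates the disjointness requirement in the definition of a handcuff. The remedy is to exploit that walk vertices are formally distinct from formula variables: use two distinct walk vertices $u_\alpha \ne u_\gamma$ both mapped by $h$ to the formula variable $x_0$, connected by a length-two walk consisting of a bend of $C$ adjacent to $x_0$ followed by its reverse; such a bend always exists since both $\phi_0$ and $\phi_n$ are adjacent to $x_0$ and at most one of them is substituted away. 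Should the $\alpha$-side construction still fail, the symmetric construction on the $\beta$-side using $\gamma_2$ provides an alternative, matching the lemma's disjunctive conclusion that at least one of $\Phi \wedge \alpha$ or $\Phi \wedge \beta$ is not handcuff consistent.
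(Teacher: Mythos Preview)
Your plan is essentially the paper's: compute the cycle residue $\gamma\vee\delta$, observe that $\gamma$ clashes with $\alpha$ and $\delta$ with $\beta$, trace (say) $\delta$ back to its source literal $\eta$ in some $\phi_i$, and exhibit the handcuff $(\beta,\text{arc of }C,\eta)$ (or its mirror) as a refutation of $(\Phi\wedge\beta)[\phi_i/\eta]$.

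Two remarks on the extra machinery you add. The ``self-reducing'' induction is not needed: unsatisfiability of the handcuff is equivalent to unsatisfiability of $\beta(x_0)\wedge\res_{(\phi_0,\dots,\phi_{i-1})}(x_0,x_i)\wedge\eta(x_i)$, and the only worrisome disjunct of that residue is its $x_i$-bound $\mu$; but if $\mu\wedge\eta$ were satisfiable then the next composed residue (and hence the final $x_0$-bound $\delta$) would be trivially true, contradicting that $\delta\wedge\beta$ fails. So the path bends' extra disjuncts cannot rescue satisfiability, and no induction on length is required. As for the degenerate case $x_{j_1}=x_0$, your detour $(\phi,\phi^{-1})$ does give a homomorphism, but the resulting handcuff conjunction $\eta(u)\wedge\phi(w,u)\wedge\phi(w,v)\wedge\beta(v)$ is not unsatisfiable in general (nothing forces $u=v$). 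The paper instead offers two candidate handcuffs, one for each direction along the cycle, corresponding to whether $\delta$ arises from the left or the right bound of $\res_P$; this is how the degeneracy is meant to be absorbed, though the paper is admittedly terse about it.
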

\begin{proof}The statement is trivial
if $\res_C(x_0)$ is equivalent to a bound. Otherwise, 
the residue inequality $\res_C(x_0)$ must be of the form $\gamma \vee \delta$ for
bounds $\gamma$ and $\delta$
such that $\gamma \wedge \alpha$ is unsatisfiable
and $\delta \wedge \beta$ is unsatisfiable. 
\red{Suppose the latter case applies; the other case is similar. 
Let $P$ be the 
path $(\phi_0(x_0,x_1),\dots,\phi_n(x_n,x_{n+1}))$ where $x_{n+1}$ is a new variable. 
Let $\eta$ be the source of 
$\delta$ with respect to $P$, and suppose $\eta$ is a bound of $\phi_i$. 
Then $(\eta,(\phi_{i+1},\dots,\phi_n),\beta)$ or $(\beta,(\phi_{1},\dots,\phi_{i-1}),\eta)$
is an unsatisfiable handcuff in $\Phi \wedge \eta$, which implies the statement.}
\end{proof}

\begin{corollary}\label{cor:short-handcuff}
\red{A bijunctive formula $\Phi$ is handcuff consistent 
if and only if for every bound $\beta$ from a bend $\phi \in \Phi$ the formula $\Phi[\phi/\beta]$ does not have a homomorphism from an unsatisfiable handcuff $(C,P,D)$ to $\Phi$ such that $|C|,|P|,|D| \leq n$.}
\end{corollary}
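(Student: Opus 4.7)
The forward direction is immediate from the definition: if $\Phi$ is handcuff consistent then no $\Phi[\phi/\beta]$ admits any handcuff refutation, in particular none whose constituent cycles and path have length at most $n$.

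For the reverse direction I propose a minimality argument. Suppose $\Phi$ is not handcuff consistent. Then there exist a bend $\phi \in \Phi$, a bound $\beta$ that is a literal of $\phi$, an unsatisfiable handcuff $(C, P, D)$, and a homomorphism $h$ from this handcuff to $\Phi[\phi/\beta]$. Among all such witnesses (varying $\phi$, $\beta$, and the handcuff) pick one of minimum total length $|C|+|P|+|D|$. The goal is to show that in this minimum witness we have $|C|, |P|, |D| \leq n$.

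The key step is injectivity: in a minimum witness, $h$ must be injective on the variable set of each of $C$, $P$, and $D$ taken separately. Once this is proved, since these variable sets inject into $V$ with $|V| = n$, the size bound follows immediately (a cycle of length $k$ has $k$ distinct variables and a path of length $k$ has $k+1$). Suppose for contradiction that $h(x_i) = h(x_j)$ for two distinct variables of the path $P$ with $i<j$, and decompose $P = P_1 \cdot P_2 \cdot P_3$ with $P_2$ running from $x_i$ to $x_j$. Then $h(P_2)$ is a closed walk at $h(x_i)$ in $\Phi[\phi/\beta]$, from which one can extract a cycle. I would then produce a strictly shorter unsatisfiable handcuff refuting $\Phi[\phi/\beta]$ by one of two moves: either (i) identify $x_j$ with $x_i$ and use the shorter middle path $P_1 \cdot P_3$, or (ii) discard $P_1$ and $P_3$ and use a cycle extracted from $h(P_2)$ together with a bound sourced from $C$ or $D$, in the spirit of the construction in the proof of Lemma~\ref{lem:cycle}. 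Repeated $h$-images inside $C$ or $D$ are treated analogously: the repeated image inside a cycle yields a strictly shorter closed walk, and Lemma~\ref{lem:cycle} is invoked to replace an over-long cycle by a single-bound cycle of length zero without losing unsatisfiability.

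The main obstacle is verifying that the shortening move preserves unsatisfiability of the handcuff. Unsatisfiability of $(C, P, D)$ is a statement about the residues $\res_C(x)$, $\res_P(x, y)$, $\res_D(y)$, each of which is a bend, hence a disjunction of up to two bounds in the cases of $\res_C$ and $\res_D$. Which of the two moves above succeeds depends on which disjunct of each cycle's residue is falsified by the remaining pieces of the handcuff, which is exactly the case distinction already carried out in the proof of Lemma~\ref{lem:cycle}. Lifting that analysis from the cycle-plus-two-bounds configuration to a full handcuff is the technical heart of the argument, and justifies that the strictly shorter handcuff produced above is still an unsatisfiable handcuff refutation of $\Phi[\phi/\beta]$, contradicting minimality.
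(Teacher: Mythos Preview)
Your high-level strategy --- minimise over all choices of $(\phi,\beta)$ and the handcuff, locate a repeated $h$-image, extract a cycle from the repeated segment, and invoke Lemma~\ref{lem:cycle} --- is exactly the paper's. Two corrections to your description of the shortening step are needed, however.

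First, move~(i), the shortcut $(C,P_1P_3,D)$, does not in general preserve unsatisfiability and is never used; the argument always goes through the extracted cycle. Second, and more importantly, the strictly shorter witness is not a refutation of the \emph{same} $\Phi[\phi/\beta]$ but of $\Phi[\phi'/\beta']$ for a \emph{new} bound $\beta'$ that is a literal of some bend \emph{inside} the extracted cycle; this is precisely why the minimisation must range over all $\beta$. Concretely for $|P|>n$: write $P=P_1P_2P_3$ with $h$ identifying the endpoints of $P_2$, and let $E$ be the cycle obtained from $P_2$. Then $\res_{CP_1}(z)\wedge\res_E(z)\wedge\res_{P_3D}(z)$ is unsatisfiable, and Lemma~\ref{lem:cycle} applied to the cycle $E$ (with these two residues playing the roles of the lower and upper bound) yields that the conjuncts of $CP_1E$ or of $EP_3D$ are already not handcuff consistent --- that is, one arm of the original handcuff together with $E$ admits a shorter refutation, contradicting minimality. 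So your move~(ii) has the roles reversed: one keeps the whole arm $CP_1$ (or $P_3D$), not merely a bound distilled from it, and the new length-$0$ cycle at the other end is the literal $\eta$ sourced from inside $E$, not from $C$ or $D$. The cases $|C|>n$ and $|D|>n$ are handled the same way.
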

\begin{proof}
\red{Clearly, handcuff consistency implies the given condition in the statement. Now suppose that 
$\Phi$ is not handcuff consistent, i.e., 
there is a literal $\beta$ in a bend $\phi$ of $\Phi$
such that there is a homomorphism $h$ from an unsatisfiable handcuff $(C,P,D)$ to $\Psi := \Phi[\phi/\beta]$.
Pick $\beta$ such that $|CPD|$ is shortest possible. If $|C|,|P|,|D| \leq n$ there is nothing to show.
Suppose first that $|P| > n$; then there must be variables $z_1,z_2$ that appear in $P$ such that $h(z_1)=h(z_2)$. 
Write $P$ as $P_1QP_2$ where
$Q$ is a path from $z_1$ to $z_2$. 
Let $E$ be the cycle obtained from $Q$ by replacing both $z_1$ and $z_2$ by a new variable $z$. Note that
$\res_{CP_1}(z) \wedge \res_{E}(z) \wedge \res_{P_2D}(z)$ is unsatisfiable.  Lemma~\ref{lem:cycle} implies that the conjuncts of $CP_1E$ or the conjuncts of $EP_2D$ are not handcuff consistent. In both cases we obtain a contradiction to the choice of $\beta$ so that $|CPD|$ is shortest possible. 
Now suppose $|C| > n$; the case that $|D| > n$ is analogous. Then there must be variables $z_1,z_2$ that appear in $C$ such that $h(z_1)=h(z_2)$. 
Write $C$ as $C_1QC_2$ where
$Q$ is a path from $z_1$ to $z_2$. 
Let $E$ be the cycle obtained from $Q$ by replacing both $z_1$ and $z_2$ by a new variable $z$. Note that
$\res_{DPC_1}(z) \wedge \res_{E}(z) \wedge \res_{DPC_2}(z)$ is unsatisfiable.  Lemma~\ref{lem:cycle} implies that the conjuncts of $DPC_1E$ or the conjuncts of $DPC_2E$ are not handcuff consistent. In both cases we obtain a contradiction to the choice of $\beta$ so that $|CPD|$ is shortest possible.} 
\end{proof}

\section{An algorithm for median-closed constraints}
\label{sect:alg}
This section presents a (strongly) polynomial  algorithm for deciding whether a given bijunctive formula $\Phi$ is satisfiable over the rational numbers. 
The overall structure of our algorithm is similar to the structure of the algorithm of Hochbaum and Naor~\cite{HochbaumNaor} for TVPI constraints. 
A key subprocedure of their algorithm is
a procedure that Hochbaum and Naor credit to Aspvall and Shiloach~\cite{AspvallShiloach}. 
We mention that the procedure is not presented in~\cite{AspvallShiloach}, 
but can be found implicitly in the PhD thesis of Aspvall~\cite{Aspvall}, 
as has been noted already in~\cite{RestrepoWilliamson}. 

The required subprocedure 
tests for a given set of TVPI constraints $\Psi$,
a value $s \in {\mathbb Q}$, and a variable $x$ of $\Psi$ whether
$\Psi \wedge x \geq s$ is satisfiable; if $\Psi$ is unsatisfiable, the procedure can answer arbitrarily. 
One of the contributions of this work is a  generalisation of this procedure from TVPI constraints to bijunctive formulas $\Psi$. 
The new procedure is called PROPAGATE and is described in Section~\ref{sect:as-gen}. We then 
explain how to use the procedure PROPAGATE to decide satisfiability of
$\Psi$ in Section~\ref{sect:hn-gen}.

To describe the procedures in more detail, 
first observe that
deciding the following tasks can be done in
polynomial time (the first three even in constant time), assuming unit cost for performing the arithmetic operations addition, multiplication, and size comparison: 
\begin{enumerate}
\item deciding whether a bound implies another bound;
\item computing the strongest bound on a given variable which is implied by the 
conjunction of a bend and two bounds;
\item deciding whether the conjunction of two bounds is unsatisfiable;
\item \blue{computing the residue bend of a cycle of bends.}
\end{enumerate}
We will therefore use these tasks freely in the pseudocode of our algorithms.

\subsection{Generalising Aspvall-Shiloach}
\label{sect:as-gen} 
In this section we describe a procedure that
tests whether $\Phi \wedge x \geq s$ is satisfiable 
for a given satisfiable bijunctive formula $\Phi$ with variables $V$. 
 The idea of the algorithm is to propagate bounds
 along constraints to obtain stronger and stronger bounds on the variables that are implied by $\Phi \wedge x \geq s$; if we find a contradiction in this way, then $\Phi \wedge x \geq s$ is clearly unsatisfiable. 
 However, the procedure might not terminate while deriving stronger and stronger bounds; 
 a simple unsatisfiable example of this type
  is $y \geq 2x \wedge x \geq 2y \wedge x \geq 1$
  where we can derive the bounds $x \geq 1$, 
  $y \geq 2$, $x \geq 4$, $y \geq 8$, $x \geq 16$ etc. 
   
To get around this problem, the algorithm
also uses the bound propagation to detect cycles  $(\phi_0(u_0,u_1),\dots,\phi_k(u_{k-1},u_0))$ 
of bends in $\Phi$, 
and then uses these cycles to symbolically
compute better bounds on $u_0$, if possible. 
In the example above,
we would use the sequence $(2x \leq y, 2y \leq x)$
to deduce the bound $x \leq 0$. 
Aspvall showed for TVPI constraints $\Phi$ that if we do not find a contradiction after propagating bounds for $3n$ steps in this way, where $n$ is the number of variables in $\Phi$, then $\Phi \wedge x \geq s$ is satisfiable (essentially Lemma 9 in~\cite{Aspvall}). The proof is based on Shostak's theorem.

This step is for bends rather more complicated than the corresponding step for TVPI constraints in Aspvall's algorithm. 
However, we can still use bound propagation 
to detect cycles of bends, and to establish 
handcuff consistency in this way;
we can then use our generalisation of Shostak's
theorem to prove the satisfiability 
of $\Phi \wedge x \geq s$. 
In the following, when we write \emph{stronger} we mean strictly stronger. 

\begin{lemma}\label{lem:bend-discover}
Let $P$ be a path of bends from $x$ to $y$ and $\beta(x)$ a bound on $x$. If $\beta(x) \wedge \res_P(x,x)$ is unsatisfiable then 
\begin{enumerate}
\item $\res_P(x,x)$ is unsatisfiable, or
\item $\beta(x) \wedge \res_P(x,y)$ implies a stronger bound on $y$
than $\beta(y)$, or
\item $\beta(y) \wedge \res_P(x,y)$ implies a stronger bound on $x$
than $\beta(x)$, or
\item $\beta(x) \wedge \res_P(x,y) \wedge \beta(y)$ is unsatisfiable. 
\end{enumerate}
\end{lemma}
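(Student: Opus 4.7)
The plan is to first apply Lemma~\ref{lem:compose-bends} to collapse $\res_P(x,y)$ to a single bend
\[
\phi(x,y) \;=\; \alpha(x) \,\vee\, \chi(x,y) \,\vee\, \gamma(y),
\]
where $\alpha,\gamma$ are (possibly trivial) bounds and $\chi$ is either trivial or of the form $a_1 x + a_2 y \mathrel{\circ} c$ with $a_1, a_2 \neq 0$. I read $\res_P(x,x)$ as the substitution $y := x$ in this bend, which collapses $\chi(x,y)$ to the univariate atom $(a_1+a_2) x \mathrel{\circ} c$ --- itself either $\top$, $\bot$, or a single bound on $x$. So $\res_P(x,x)$ is a disjunction of at most three univariate bounds on $x$.

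Next I would assume that (1) fails, so $\res_P(x,x)$ is satisfiable; the hypothesis $\beta(x)\wedge\res_P(x,x)$ unsatisfiable then forces each of $\beta(x)\wedge\alpha(x)$, $\beta(x)\wedge\chi(x,x)$, and $\beta(x)\wedge\gamma(x)$ to be unsatisfiable as a univariate constraint on $x$. The sign coupling of Definition~\ref{def:bend} (the direction of $\alpha$ is tied to the sign of $a_1$, and that of $\gamma$ to $a_2$) then pins down the directions: $\alpha$, $\gamma$, and $\chi(x,x)$ must each point opposite to $\beta$. WLOG $\beta$ is a lower bound, so $a_1, a_2 > 0$ and $\alpha, \gamma$ are upper bounds with thresholds strictly below $\beta$'s.

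The main step is a contradiction argument: suppose that (2), (3), and (4) all fail. Failure of (4) produces an off-diagonal witness $(x^\dagger, y^\dagger) \in \beta(x) \times \beta(y)$ with $\phi(x^\dagger,y^\dagger)$, and at least one disjunct of $\phi$ must hold there. The disjunct $\alpha(x^\dagger)$ is ruled out because $\beta(x)\wedge\alpha(x)$ is unsatisfiable. If $\chi(x^\dagger,y^\dagger)$ holds, positivity of $a_1, a_2$ yields $\chi(v,v)$ at $v := \max(x^\dagger, y^\dagger)$; one then uses failure of (2) or (3) to shift within $\beta(x) \times \beta(y)$ until $v \in \beta(x)$, contradicting unsatisfiability of $\beta(x) \wedge \chi(x,x)$. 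If only $\gamma(y^\dagger)$ holds, failure of (2) --- read as ``no bound on $y$ implied by $\beta(x) \wedge \phi(x,y)$ is stronger than $\beta(y)$'' --- lets me extend $y^\dagger$ towards the diagonal while preserving $\phi$, again producing a diagonal witness in $\beta(x) \wedge \res_P(x,x)$.

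The main obstacle is the shifting step when $\beta(x)$ and $\beta(y)$ have different thresholds, since the naive diagonal projection $v = \max(x^\dagger, y^\dagger)$ may fall outside $\beta(x)$. Here failure of (2) and (3) is essential: they guarantee that within $\beta(x) \times \beta(y)$ there are further $\phi$-compatible witnesses which can be transported to the diagonal. The case analysis, while tedious, is streamlined by the $x \leftrightarrow y$ symmetry and by the convexity of the polyhedral corner complement of $\phi$, which makes the geometric transport concrete.
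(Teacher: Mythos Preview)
Your setup is right up to the point where you conclude $a_1,a_2>0$. That step is the gap. Unsatisfiability of $\beta\wedge\alpha$ only forces $\alpha$ to be an upper bound \emph{when $\alpha$ is non-trivial}; if $\alpha$ is the trivial disjunct $\bot$ (threshold $+\infty$), then by the coupling in Definition~\ref{def:bend} one has $\circ_1\in\{\geq,>\}$ and hence $a_1<0$. The same applies to $\gamma$ and $a_2$. So the mixed-sign cases $a_1>0,\,a_2<0$ (with $\gamma=\bot$) and $a_1<0,\,a_2>0$ (with $\alpha=\bot$) are perfectly possible; these are exactly the first two shapes the paper isolates, and in them the conclusion is (2) respectively (3), not (4). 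Your diagonal-transport argument hinges on both coefficients being positive and simply does not go through when one of them is negative: for instance with $a_1>0,\,a_2<0$ the level sets of $\chi$ are not corners but half-planes with positive slope, and no projection to the diagonal produces a witness inside $\beta$.

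Even in the case $a_1,a_2>0$ your argument is off in the details. With both coefficients positive and $\circ\in\{\leq,<\}$, the correct projection is $v:=\min(x^\dagger,y^\dagger)$, not $\max$: one has $a_1v+a_2v\le a_1x^\dagger+a_2y^\dagger\mathrel{\circ} c$, and since $x^\dagger,y^\dagger$ both satisfy the lower bound $\beta$, so does $v$. This gives the contradiction immediately, with no need to invoke failure of (2) or (3). Your ``only $\gamma(y^\dagger)$'' branch is in fact vacuous, since $\gamma$ being a non-trivial upper bound with threshold below that of $\beta$ is incompatible with $y^\dagger$ satisfying $\beta(y)$. Finally, $\beta(x)$ and $\beta(y)$ are the \emph{same} bound applied to different variables, so they have the same threshold; the ``main obstacle'' you describe does not arise. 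The paper avoids all of this by simply enumerating the three possible sign patterns of the residue bend compatible with $\res_P(x,x)$ being an upper bound, and checking in each by a one-line inequality which of (2), (3), (4) holds.
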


\begin{figure}[h]
  \begin{center}
  \includegraphics[scale=0.4]{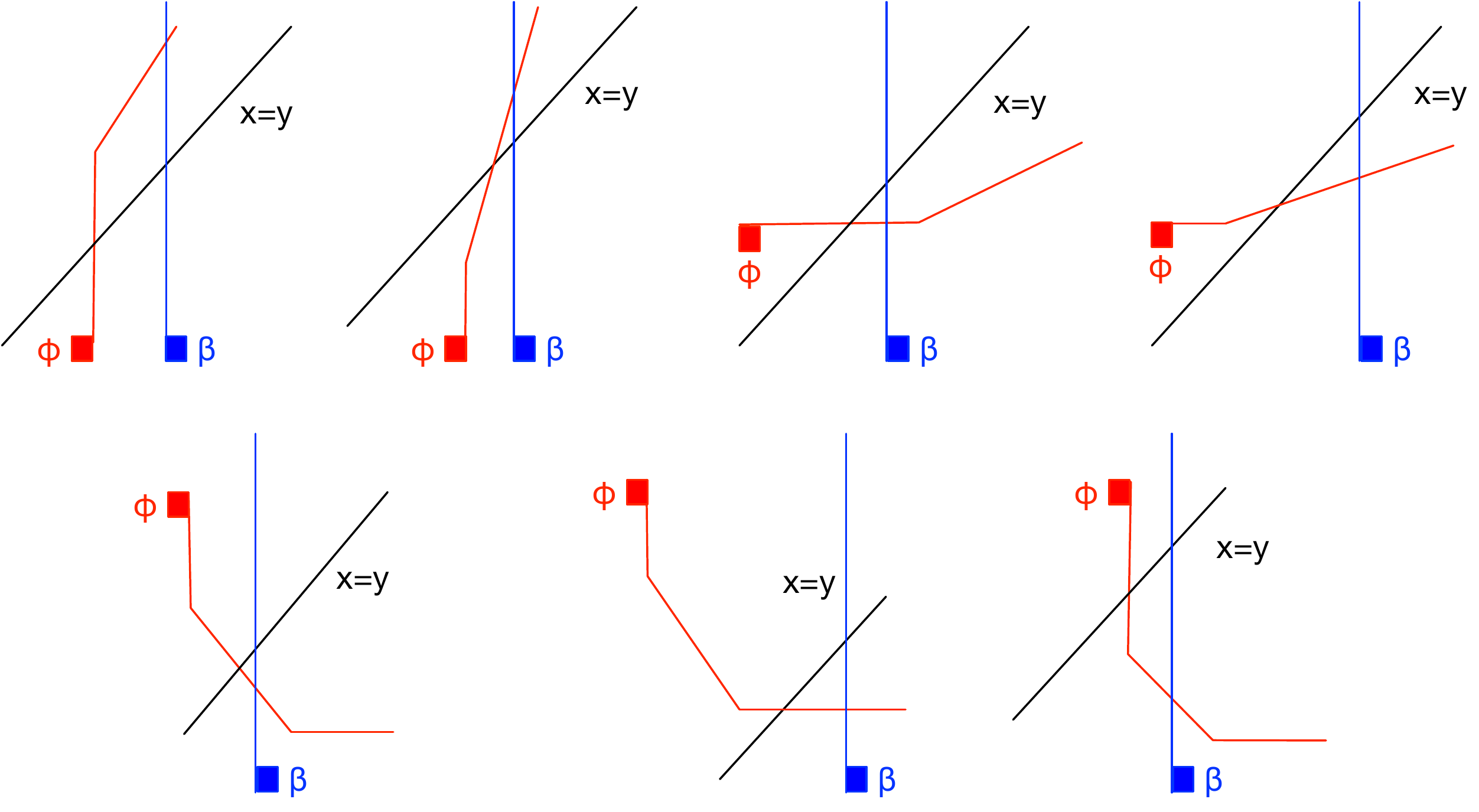}
  \end{center}
\caption{Illustrations of $\phi := \res_P(x,y)$ and $\beta(x)$ in the proof of Lemma~\ref{lem:bend-discover}.}
  \label{fig:aspvall-1}
\end{figure}

\begin{proof}
\blue{Assume that $\beta(x)$ is a lower bound $x \circ g$ for $g \in {\mathbb Q}$ and 
$\circ \in \{\geq,>\}$;} the case that $\beta(x)$ is an upper bound is analogous. 
Suppose that $\res_P(x,x)$ is satisfiable. 
Since $\beta(x) \wedge \res_P(x,x)$ is unsatisfiable, $\res_P(x,x)$ 
must be equivalent to an upper bound $\gamma(x)$. 
Then the residue bend 
$\res_P(x,y)$ is of one of the following forms (see Figure~\ref{fig:aspvall-1}):

\begin{enumerate}
\item $x \circ_1 d \vee ax+by \circ_2 c$ 
for $a,b \in {\mathbb Q}$,  $\circ_1, \circ_2 \in \{\leq,<\}$, 
$c,d \in {\mathbb Q} \cup \{-\infty\}$, 
$a > 0$, $b < 0$, 
\blue{$-\frac{a}{b} \geq 1$}, 
(the two upper left pictures in Figure~\ref{fig:aspvall-1});
\item $ax+by \circ_2 c \vee y \circ_3 d$, 
for $a,b \in {\mathbb Q}$, $\circ_2, \circ_3 \in \{\leq,<\}$, 
$c,d \in {\mathbb Q} \cup \{-\infty\}$, $a < 0$, 
$b > 0$, 
\blue{$-\frac{a}{b} \leq 1$}, 
(the two upper right pictures in Figure~\ref{fig:aspvall-1});
\item $x \circ_1 d_1 \vee ax+by \circ_2 c \vee y \circ_3 d_2$ where $\circ_1,\circ_2,\circ_3 \in \{\leq,<\}$, $a,b \in {\mathbb Q}$, $c,d_1,d_2 \in {\mathbb Q} \cup \{-\infty\}$, $a,b > 0$, and 
(the lower three pictures in Figure~\ref{fig:aspvall-1}). 
\end{enumerate}

Note that in all three cases, $a+b \geq 0$. Since $x \circ g \wedge \res_P(x,x)$ is unsatisfiable, 
\blue{$(a+b)g \circ' c$ where $\circ' \in \{>,\geq\}$ is strict if and only if
both $\circ$ and $\circ_2$ are non-strict. In the following, we assume that all inequalities in the constraints are non-strict; the adaptation to the general case is straightforward but notationally cumbersome.} 

\red{In the first case, $x \geq g \wedge \res_P(x,y)$ implies 
$g \leq x \leq \frac{c-by}{a}$ and hence 
$y \geq \frac{c-ag}{b}$. This is stronger} than $y \geq g$
since $\frac{c-ag}{b} > g$ if and only if 
$c < gb+ga$ which is true as we have noted above. 
So item $(2)$ applies. 
 
\red{In the second case, $y \geq g \wedge \res_P(x,y)$ implies 
$g \leq y \leq \frac{c-ax}{b}$
and hence $x \geq \frac{c-bg}{a}$. 
This is stronger 
than $x \geq g$
since $\frac{c-bg}{a} > g$ if and only if 
$c < bg+ag$, 
which is true as we have noted above. So item (3) applies.}

In the third case, $x \geq g \wedge \res_P(x,y)$ implies 
$y \leq \frac{c-ag}{b}$ 
and we obtain that $x \geq g \wedge \res_P(x,y) \wedge y \geq g$ is unsatisfiable: 
$g \leq \frac{c-ag}{b}$ holds 
if and only if $ag+bg \leq c$ which is false as we have noted above. 
So item $(4)$ applies. 
\end{proof}

The following lemma is illustrated in  Figure~\ref{fig:aspvall-2}. 

\begin{lemma}\label{lem:bend-detect}
Let $C = (\phi_1(x,u_1),\dots,\phi_k(u_{k-1},y))$ and $D = (\psi_1(x,v_1),\dots,\psi_l(v_{l-1},y))$ be paths between distinct variables $x,y$ and let $\beta(x)$ be a bound. 
Suppose that $\beta(x) \wedge \phi_C(x,x)$ is unsatisfiable and that the strongest bound on $y$ implied by $\beta(x) \wedge \phi_D(x,y)$ is stronger 
than the strongest bound on $y$ implied by
$\beta(x) \wedge \phi_C(x,y)$. 
Then 
\begin{enumerate}
\item $\psi_1(x,v_1) \wedge \cdots \wedge \psi_l(v_{l-1},x) \wedge \beta(x)$ is 
not handcuff consistent, or
\item $\phi_1(x,u_1) \wedge \cdots \wedge \phi_k(u_{k-1},x)  \wedge \psi_1(x,v_1) \wedge \cdots \wedge \psi_l(v_{l-1},x)$ is not handcuff consistent.
\end{enumerate}
\end{lemma}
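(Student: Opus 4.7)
The plan is to apply Lemma~\ref{lem:bend-discover} to the path $C$ with bound $\beta(x)$ and then combine its outcome with the hypothesis about $D$ through Lemma~\ref{lem:cycle}.

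If the first alternative of Lemma~\ref{lem:bend-discover} holds, then $\res_C(x,x)$ is unsatisfiable, so $\Phi_{C'} := \phi_1(x,u_1) \wedge \cdots \wedge \phi_k(u_{k-1},x)$ has an unsatisfiable existential closure and is itself unsatisfiable. By Theorem~\ref{thm:bend-shostak} it is not handcuff consistent, and since non-handcuff-consistency is preserved when we extend the target formula (a handcuff refutation homomorphism into $\Phi_{C'}$ is also one into any larger formula), conclusion~(2) follows.

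Otherwise, one of the remaining alternatives of Lemma~\ref{lem:bend-discover} holds, and in each of them the bound $\beta_C(y)$ implied by $\beta(x) \wedge \res_C(x,y)$ is strictly stronger than $\beta(y)$. Combined with the hypothesis that $\beta_D$ strictly strengthens $\beta_C$, we deduce that $\beta_D(y)$ is strictly stronger than $\beta(y)$; in particular $\beta_D(x)$ strictly strengthens $\beta(x)$, so the complementary bound $\gamma(x) := \lnot \beta_D(x)$ has the direction opposite to $\beta$. Specializing $y = x$ in $\beta(x) \wedge \res_D(x,y) \Rightarrow \beta_D(y)$ gives that $\beta(x) \wedge \res_D(x,x) \wedge \gamma(x)$ is unsatisfiable. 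Applying Lemma~\ref{lem:cycle} to the cycle $D' := D$ with $y$ identified to $x$ and the oppositely-directed pair $\{\beta, \gamma\}$, we conclude that either $\Phi_{D'} \wedge \beta$ is not handcuff consistent (conclusion~(1)) or $\Phi_{D'} \wedge \gamma$ is.

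In the latter subcase, I would exploit the fact that $\Phi_{C'}$ already provides $\gamma$ internally: from $\beta \wedge \res_C(x,x)$ unsatisfiable we get $\res_C(x,x) \Rightarrow \lnot\beta$, and since $\beta_D$ strictly strengthens $\beta$ we have $\lnot \beta \Rightarrow \lnot \beta_D = \gamma$, so $\res_C(x,x) \Rightarrow \gamma(x)$. Given a handcuff refutation of $\Phi_{D'} \wedge \gamma$, the plan is to splice in the cycle $C'$ at the shared vertex $x$, replacing the external occurrence of $\gamma$ (a bound literal in some bend of the handcuff) by the cycle $C'$ attached at $x$; this follows the same substitution pattern used in the proof of Theorem~\ref{thm:bend-shostak}. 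The outcome is a handcuff refutation of $\Phi_{C'} \wedge \Phi_{D'}$ that needs no auxiliary bound, yielding conclusion~(2).

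The main obstacle is the splicing step. One has to verify that attaching $C'$ at $x$ to the existing refutation yields a combinatorially valid unsatisfiable handcuff whose homomorphic image lies in $\Phi_{C'} \wedge \Phi_{D'}$; this requires care about variable disjointness of the two handcuff sides and about how the chain of bounds along the cycle propagates $\gamma$ from $C'$ into the literal slot vacated by the removed bound, using crucially that $\gamma$ and $\beta$ have opposite orientations and that $x$ is the unique vertex shared by $C'$ and $D'$.
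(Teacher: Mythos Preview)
Your approach differs from the paper's: instead of a direct sign analysis on the coefficients of the residue bends $\phi_C(x,y)$ and $\phi_D(x,y)$ (which is how the paper proceeds, case-splitting on whether the bound literal $y\circ_2' d_2'$ in $\phi_D$ is trivial), you try to reduce everything to Lemma~\ref{lem:bend-discover} and Lemma~\ref{lem:cycle}. That is an attractive strategy, but as written it has a genuine gap.

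The problematic step is the sentence ``one of the remaining alternatives of Lemma~\ref{lem:bend-discover} holds, and in each of them the bound $\beta_C(y)$ implied by $\beta(x)\wedge\res_C(x,y)$ is strictly stronger than $\beta(y)$.'' This is true for alternative~(2) but false for alternatives~(3) and~(4). In alternative~(3), the conclusion is about the bound on $x$ implied by $\beta(y)\wedge\res_C(x,y)$, not about $\beta_C(y)$; in the corresponding case of the proof of Lemma~\ref{lem:bend-discover} (namely $a<0$, $b>0$), the formula $\beta(x)\wedge\res_C(x,y)$ gives no nontrivial lower bound on $y$ at all. In alternative~(4) (the case $a,b>0$), the strongest bound on $y$ implied by $\beta(x)\wedge\res_C(x,y)$ is an \emph{upper} bound, so it certainly is not a lower bound stronger than $\beta(y)$. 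Consequently you cannot conclude that $\beta_D$ is stronger than $\beta$; without that, the implication $\neg\beta\Rightarrow\neg\beta_D=\gamma$ fails, and your splicing step collapses because $\res_{C'}(x)$ need not imply $\gamma(x)$. Moreover, when $\beta_D$ is weaker than $\beta$ the pair $(\beta,\gamma)$ is already contradictory, so Lemma~\ref{lem:cycle} becomes vacuous and yields nothing usable.

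The gap is repairable, but only by importing exactly the coefficient-sign argument that the paper carries out: from $\beta(x)\wedge\phi_C(x,x)$ unsatisfiable one gets $a+b\geq 0$, and from the hypothesis that $\beta_D$ (necessarily a lower bound, by a separate sign argument on $\phi_D$) is stronger than $\beta_C$ one rules out $b>0$, forcing $a>0$, $b<0$, $-a/b\geq 1$, i.e.\ precisely alternative~(2). Once that is in place your use of Lemma~\ref{lem:cycle} and the splicing via $C'$ go through (the splicing itself is fine after renaming variables in $C'$ to make the handcuff well-formed). But note that at that point you have essentially reproduced the paper's case analysis as a preliminary step, so the detour through Lemma~\ref{lem:bend-discover} buys little.
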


\begin{figure}[h]
  \begin{center}
  \includegraphics[scale=0.4]{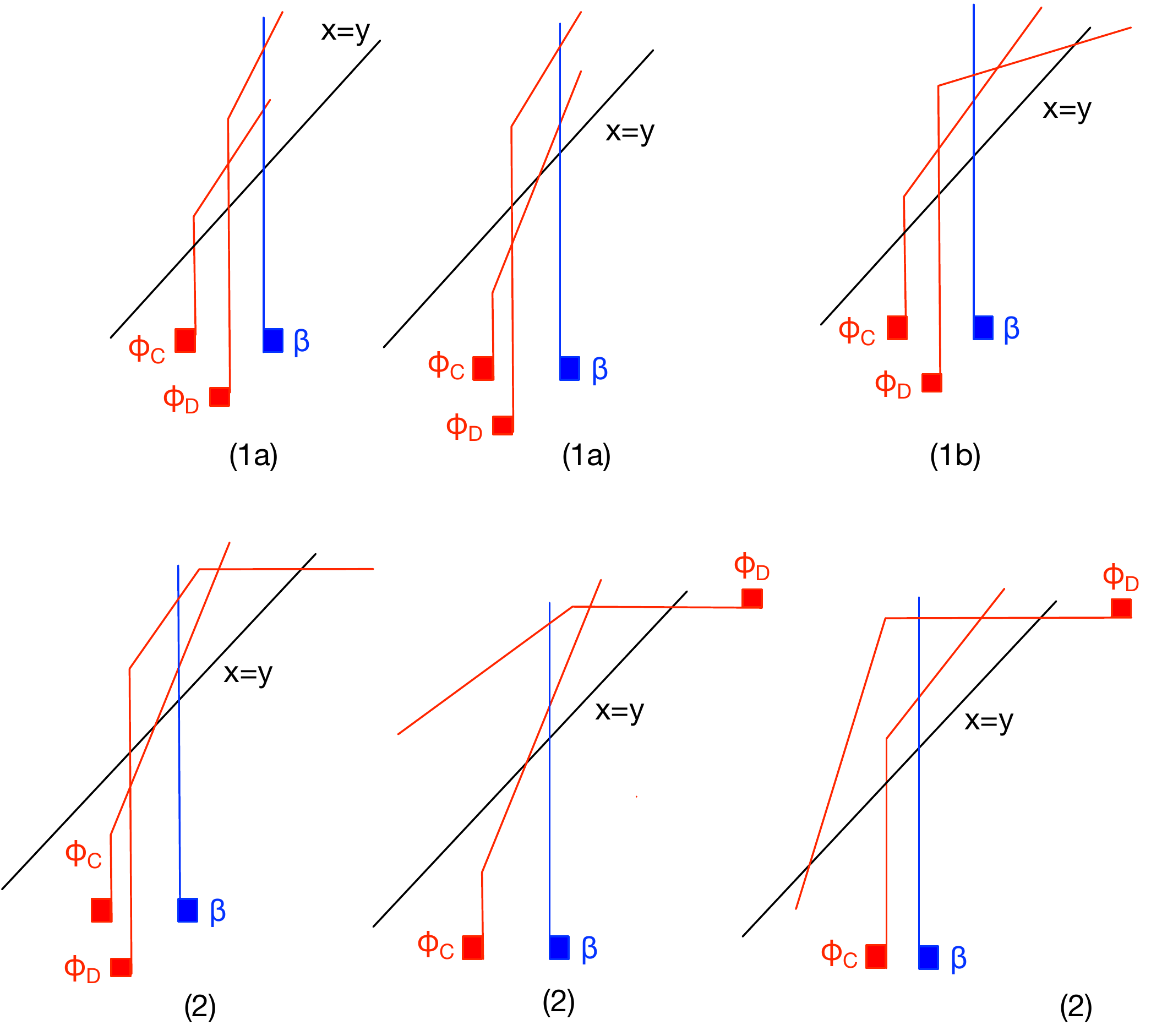}
  \end{center}
\caption{Illustrations of $\phi_C(x,y)$, $\phi_D(x,y)$, and $\beta(x)$ in the proof of Lemma~\ref{lem:bend-detect}.}
  \label{fig:aspvall-2}
\end{figure}

\begin{proof}
Assume that $\beta(x)$ is a lower bound; the case that $\beta(x)$ is an upper bound is analogous.

Let $x \circ_1 d_1 \vee ax + by \leq c \vee y \circ_2 d_2$ be the residue bend $\phi_C(x,y)$, for $\circ_1,\circ_2 \in \{\leq,<,\geq,>\}$ and $a,b \in {\mathbb Q}$, $c \in {\mathbb Q} \cup \{-\infty\}$. 
Since $\beta(x) \wedge \res_P(x,x)$ is unsatisfiable, we can assume that $a+b \geq 0$, as in the proof of the previous lemma. 
Let $x \circ_1' d'_1 \vee a'x + b'y \circ' c' \vee y \circ'_2 d'_2$ be the residue bend $\phi_D(x,y)$, for $\circ'_1,\circ'_2,\circ' \in \{ \leq,<,\geq,>\}$, $a',b' \in {\mathbb Q}$ and  $c' \in {\mathbb Q} \cup \{-\infty\}$. 

If $a' < 0$ then $a'x+b'y \circ' c'$ 
is equivalent to $\frac{c' - b' y}{a'} \circ' x$ and $\circ'_1 \in \{\geq,>\}$ by the definition of bends. 
Since both $\beta(x)$ and $x \circ'_1 d_1$ are lower bounds, the strongest bound on $y$ implied by
$\beta(x) \wedge x \circ'_1 d'_1 \vee a'x+b'y \circ' c' \vee y \circ'_2 d'_2$ 
is trivial too, contrary to our assumptions.  
So let us assume that
$a' > 0$ and therefore, by the definition of bends,
that $\circ_1 \in \{\leq,<\}$. 
Again, since the strongest lower
bound on $y$ implied by $x \geq 0 \wedge x \circ_1' d_1' \vee a'x+b'y \circ' c' \vee y \circ_2' d_2'$ must be non-trivial, we get $b' < 0$ and $\circ_2 \in \{\geq,>\}$. 

We claim that $b < 0$. Otherwise, 
 if $b > 0$ then $\circ_2 \in \{\leq,<\}$
 and 
  $\beta(x) \wedge \phi_C(x,y)$  would imply an upper bound on $y$
in contradiction to the assumption that 
the strongest bound on $y$ implied by 
$\beta(x) \wedge \phi_D(x,y)$ is a lower bound, and supposed to be stronger. We distinguish the following cases. 
\begin{enumerate}
\item[(1)] \blue{$d'_2 = +\infty$ 
so that $y \circ_2' d'_2$ is equivalent to false.
In this case 
$\phi_D(x,x) \wedge \beta(x)$ is unsatisfiable (see the illustrations in Figure~\ref{fig:aspvall-2} labelled with (1a)) or otherwise not handcuff consistent (labelled with (1b) in Figure~\ref{fig:aspvall-2}))}. 
\item[(2)] \blue{$d'_2 < +\infty$ (see the illustrations labelled with (2) in Figure~\ref{fig:aspvall-2}). 
We claim that in this case
 $\phi_1(x,u_1) \wedge \cdots \wedge \phi_k(u_{k-1},x)  \wedge \psi_1(x,v_1) \wedge \cdots \wedge \psi_k(v_{l-1},x)$
is not handcuff consistent, so we have established item $(2)$ of the statement. }
\end{enumerate}
This concludes the proof. 
\end{proof}

The procedure PROPAGATE is given in Figure~\ref{fig:alg-as-gen}. It uses the following terminology: 
A literal $\psi$ in a bend $\phi(u,v)$ from $\Phi$
is called \emph{redundant} if $\phi \wedge \beta_u^{\low} \wedge \beta_u^{\high} \wedge \beta_v^{\low} \wedge \beta_v^{\high}$
is equivalent to $\phi' \wedge \beta_u^{\low} \wedge \beta_u^{\high} \wedge \beta_v^{\low} \wedge \beta_v^{\high}$ where $\phi'$ is obtained from $\phi$ by removing $\psi$. 
The main loop of the algorithm is executed until the number of redundant literals of $\Phi$ does not change; clearly, this condition can be checked efficiently. Within the main loop, 
the algorithm performs bound propagations for at most $2 \cdot |V|$ rounds, where $V$ is the set of variables of $\Phi$; the motivation for this is that 
the algorithm needs to find paths from $x$ to some variable $y$ and a cycle starting and ending in $y$, and the total number of vertices 
on the path and the cycle is bounded by $2 \cdot |V|$. \red{If we detect a closed walk $D$ starting and ending in $x$ after having performed the bound propagation, we transform $D$ into a cycle $D'$ by replacing all but the first and the last variable on the walk by fresh variables. 
This residue bend of $D'$ will be used 
to further improve the bounds $\beta_v^{\high}$ and $\beta_v^{\low}$ (lines 20-23).} 

\begin{figure}[h]
\fbox{
\begin{tabular}{l}
// Input: a bijunctive formula $\Phi$ with variables $V$, $x \in V$, and $s \in {\mathbb Q}$. \\
// Task: If $\Phi$ is satisfiable, decide whether
$\Phi \wedge x \geq s$ has a solution. \\
// If $\Phi$ is unsatisfiable, any answer is fine. \\
\\
01: For all $u \in V$ do \\
02: \hspace{.3cm} Let $\beta_u^{\low}$ be the most restrictive lower bound on $u$ in $\Phi$. \\
03: \hspace{.3cm} Let $\beta_u^{\high}$ be the most restrictive upper bound on $u$ in $\Phi$. \\
04: \hspace{.3cm} For all $i \in \{1,\dots,2 \cdot |V|\}$, let $P^{\low}_{u,i}$ and $P^{\high}_{u,i}$ be undefined. \\
05: Replace $x^{\low}$ by `$x \geq s$'
if this is more restrictive. \\
06: Do \\
07: \hspace{.3cm} 
For $i = 1,\dots,2 \cdot |V|$ do \\
08: \hspace{.6cm} 
For each bend 
$\phi(u,v) \in \Phi$:  \\
09: \hspace{.9cm} Let $\beta$ be the strongest bound on $v$ implied by $\beta_u^{\low} \wedge \beta_u^{\high} \wedge \phi(u,v)$. \\
10: \hspace{0.9cm} If $\beta$ is more restrictive than $\beta^{\low}_v$: \\
11: \hspace{1.2cm} Replace $(\beta_v^{\low},P^{\low}_{v,i})$  by $(\beta,\phi(u,v))$; \\ 
13: \hspace{1.2cm} If $\beta_v^{\low} \wedge \beta_v^{\high}$ is unsatisfiable 
then answer `No'; \\
14: \hspace{.9cm} Analogous steps to 10-13 for $\beta_v^{\high},P_{v,i}^{\high}$ instead of $\beta_v^{\low},P_{v,i}^{\low}$.  \\
\blue{15: \hspace{.3cm} For all $v \in V$ do} \\
\blue{16: \hspace{.6cm} Let $u := v$; $D := ()$; $i := 2 \cdot |V|+1$} \\
17: \hspace{.6cm} While there exists $j < i$ such that $P_{u,j}^{\low}$ is defined: \\
18: \hspace{.9cm} Let $j < i$ be largest such that $\phi(w,u) := P_{u,j}^{\low}$ is defined.\\
19: \hspace{.9cm} Let $u := w$, $D := (\phi,D)$, $i:=j$; \\
20: \hspace{.9cm} If $w=v$ then \\
21: \hspace{1.2cm} let $D'$ be the cycle starting at $v$ obtained from $D$ by renaming variables; \\
22: \hspace{1.2cm} replace $\beta^{\low}_{v}$ by the strongest lower bound implied by $\phi_{D'}(v) \wedge \beta^{\low}_{v}$; \\
23: \hspace{1.2cm} replace $\beta^{\high}_{v}$ by the strongest upper bound implied by $\phi_{D'}(v) \wedge \beta^{\high}_{v}$. \\
23: \hspace{.6cm} Analogous steps to 16-23 for
$P_{v,i}^{\high}$ instead of $P_{v,i}^{\low}$.  \\
24: Loop until number of redundant literals in $\Phi$ does not change. \\
25: Answer `Yes'.
\end{tabular}
}
\caption{The procedure PROPAGATE.} 
\label{fig:alg-as-gen}
\end{figure}

\medskip 
{\bf Correctness of PROPAGATE.} 
Suppose that $\Phi$ is satisfiable, since otherwise there is nothing to be shown. 
Note that at each time of the execution of the algorithm and for every $u \in V$ we have that $\Phi \wedge x \geq s$ implies 
$\beta_u^{\low}$ and $\beta_u^{\high}$; this can be shown by a straightforward induction.
Hence, if the algorithm finds that $\beta_v^{\high} \wedge \beta_v^{\low}$ is unsatisfiable, then the answer `No' is
correct. 

Now suppose that the algorithm answers
`Yes'. We have to show that in 
this case $\Phi \wedge x \geq s$ is satisfiable. 
By Theorem~\ref{thm:bend-shostak}, it suffices
to show that at the final stage of the algorithm, 
the instance $\Psi$ obtained from $\Phi \wedge x \geq s$ by removing all redundant literals is handcuff consistent; \red{we use the characterisation of handcuff consistency from Corollary~\ref{cor:short-handcuff}}. 
We first show that $\Psi$ itself does not have a handcuff refutation. 
Otherwise, 
since $\Phi$ is satisfiable, 
$\Psi$ has a handcuff refutation 
that involves the conjunct $x \geq s$. By the convention that paths and cycles with at least two variables do not contain one-variable bends 
(and by Corollary~\ref{cor:short-handcuff}) 
we can assume without loss of generality that 
\red{the unsatisfiable handcuff that has a homomorphism $r$ to $\Psi$ is of the form
$((x \geq s),P,C)$ and $|P|,|C| \leq |V|$.}
Suppose that $P$ is a path from $x$ to $y$,
and let $\beta(y)$ be the strongest bound  implied by $x \geq s \wedge \res_P(x,y)$. 
We assume that $\beta$ is a lower bound; the argument when $\beta$ is an upper bound is analogous. Note that then $\phi_C(y)$ must be an upper bound. 
After at most $|V|$ iterations of the inner loop 
the algorithm will update \red{$\beta^{\low}_{r(y)}$} with
a bound that is at least as strong as the bound $\beta$ above. After at most $|V|$ more iterations the algorithm updates \red{$\beta^{\low}_{r(y)}$} again; this follows from
Lemma~\ref{lem:bend-discover} 
applied to $\beta^{\low}_{r(y)}$ and \blue{the
path $(\psi_1(x,v_1), \dots, \psi_k(v_{l-1},v_l))$ obtained from the 
cycle $C$ where we replace the last variable by a new variable $z$. 
Hence, if-clause in line 20 will apply for \red{$v=r(y)$}, i.e., 
$D$ will contain a closed walk. 
Let $D' := (\phi_1(x,u_1), \dots,\phi_k(u_{k-1},u_k))$ be the path obtained from $D$ by replacing repeated occurrences of variables by new variables. 
\red{Then it is clear from the algorithm that 
$\beta_{r(y)}^{\low} \wedge \phi_{D'}(r(y),z)$ implies a stronger bound for $z$ than $\beta(y) \wedge \phi_C(y,z)$.}
Lemma~\ref{lem:bend-detect} 
implies that $\phi_{D'}(y,y) \wedge \beta(y)$ is not handcuff consistent  or $\phi_1(x,u_1) \wedge \cdots \wedge \phi_k(u_{k-1},x) \wedge \psi_1(x,v_1) \wedge \cdots \wedge \psi_k(v_{l-1},x)$
is not handcuff consistent. The second case is impossible by our
assumption that $\Phi$ is handcuff consistent.
Hence, the first case applies. 
If $(\beta(y),(),\phi_{D'}(y,y))$ itself is a handcuff refutation then in the algorithm $\beta^{\low}_v$ or $\beta^{\high}_v$ gets replaced by false, and 
we obtain
a contradiction to the assumption that the algorithm did not return `No'. 
Otherwise, we have $D' = (F_1,\phi_i,F_2)$ and a literal $\gamma(w)$ of $\phi_i$ such that 
$(\beta,F_1,\gamma)$ or $(\gamma,F_2,\beta)$ \red{has a homomorphism to $\Psi$.} 
But this means that 
$\beta_w^{\high}$ or $\beta_w^{\low}$ would be in contradiction to $\gamma(w)$, 
contrary to the assumption that $\Psi$ is constructed from the non-redundant literals at the final state of the algorithm.}

To show that $\Psi$ is handcuff consistent, 
suppose otherwise that $\delta(y)$
is a bound which appears as a literal of a bend $\phi$ from $\Psi$ such that the instance $\Psi'$ obtained from $\Psi$ 
by replacing $\phi$ by $\delta$ has a handcuff refutation, \red{i.e., there is an unsatisfiable handcuff $(C,P,(\delta))$ with $|C|,|P| \leq |V|$ and a homomorphism $r$ to $\Psi$. 
Since $\Phi$ is satisfiable,}
we must have $C = (x \geq s)$. 
Consider the time point  
of the execution of the algorithm 
where all 
bounds $\beta_u^{\high}$ and $\beta_u^{\low}$
have been derived that witness that some literals in $\Phi$ are redundant at the final stage of the algorithm.
Then after $|V|$ more iterations of the second loop the algorithm would have derived a bound 
\red{$\beta=\beta^{\low}_{r(y)}$ or $\beta=\beta^{\high}_{r(y)}$} such that $\beta(y) \wedge \delta(y)$ is unsatisfiable, in contradiction to the assumption that $\Psi$ does not contain redundant literals. 
\qed 

\medskip 
{\bf Running time.} 
Let $n$ be the number of variables and 
$m$ be the number of constraints in $\Phi$.
We claim that the PROPAGATE 
performs $O(nm^2)$ arithmetic operations. 
The
outer loop of the algorithm is executed $O(m)$ many times. 
\blue{The execution of the loop `For $i=1,\dots,2 \cdot |V|$ do' in line 07 takes $O(nm)$ many steps.
The execution of the loop `For all $v \in V$ do'
takes $O(nm)$, too; so this matches the running time claimed by Hochbaum and Naor for the procedure that they credit to Aspvall and Shiloah).} 
The computation in the inner part of the loops
involves arithmetic operations. 
However, the representation sizes of these
numbers remain linear in the input size, so that we obtain a strongly polynomial bound on the running time of the algorithm. 
\qed

\subsection{Generalising Hochbaum-Naor}
\label{sect:hn-gen}
Using the procedure PROPAGATE from Figure~\ref{fig:alg-as-gen}, we can generalise the algorithm of Hochbaum-Naor for TVPI constraints from~\cite{HochbaumNaor} to bijunctive formulas.

\begin{definition}\label{def:gen-breakpoints}
Let $\Phi$ be a set of bends on the variables $x,y$. If there are two literals $a_1 x + b_1 y \circ_1 c_1$ and $a_2 x + b_2 y \circ_2 c_2$
in $\Phi$ such that
$a_1 x + b_1 y = c_1 \wedge a_2 x + b_2 y = c_2$ 
has exactly one solution $(u,v) \in {\mathbb Q}^2$, then $(u,v)$ is called a \emph{breakpoint} of $\Phi$. 
\end{definition}

\begin{figure}[h]
\fbox{
\begin{tabular}{l}
// Input: a bijunctive formula $\Phi$ with the variables $V$. \\
// Task: decide whether $\Phi$ is satisfiable. \\
\\
01: If $\Phi$ has only one variable, return whether
$\Phi$ is satisfiable or not \\ 
02: \hspace{.5cm} (this is straightforward to decide). \\
03: Otherwise, pick a variable $x$ of $\Phi$ (to be eliminated). \\
04: For $i \in \{1,\dots,|V|\}$, let $B_i$ be a list of all $u \in {\mathbb Q}$ such that  \\
05: \hspace{.5cm} there is a breakpoint $(u,v)$ for the constraints in $\Phi$ on the variables $x,x_i$, \\
06: \hspace{.5cm} or there is a bound $x \circ_1 u$ in $\Phi$ for
$\circ_1 \in \{<,>,\leq,\geq\}$. \\
07: Merge all the sequences $B_1,\dots,B_{|V|}$ into a sorted sequence $B = (b_1,\dots,b_k)$. \\
08: Perform a binary search on $B$ to find the largest $b_{\ell}$ \\
09: \hspace{.5cm} where the procedure PROPAGATE applied to $\Phi \wedge x \geq b_{\ell}$ returns `Yes'. \\
10: \hspace{.5cm} If such an $\ell$ does not exist, set $\ell :=0$ and $b_{\ell} = -\infty$. \\
11: \hspace{.5cm} If $\ell = k$, set $b_{\ell+1} := +\infty$. \\ 
12: Replace each bend $\phi$ in $\Phi$ that involves $x$ by the disjunct $\psi$ of $\phi$  \\
13: \hspace{.5cm}  such that $\psi \wedge x \geq b_{\ell} \wedge x < b_{\ell+1}$ is weakest (see correctness proof of algorithm).  \\
// After this step, all bends that involve $x$ are equivalent to \\
// bounds or two-variable linear inequalities. \\
14: Let $\Psi$ be obtained from $\Phi$ by removing all bends involving $x$. \\
15: For all conjuncts $\psi_1$ and $\psi_2$ of $\Phi$
involving $x$ that are strongest: \\
16: \hspace{.5cm} If $\psi_1$ is equivalent to $x \circ_1 a_1 y_1 + b_1$ for $\circ_1 \in \{>,\geq\}$ \\ 17: \hspace{.5cm} and $\psi_2$ is equivalent to $x \circ_2 a_2 y_2 + b_2$  and $\circ_2 \in \{<,\leq\}$: \\
18: \hspace{1cm}  Let $\circ$ be $<$ if $\circ_1$ or $\circ_2$ is strict, and let $\circ$ be $\leq$ otherwise. \\
19: \hspace{1cm} Add $a_1 y_1 + b_1 \circ a_2 y_2 + b_2$ to $\Psi$. \\
20: Return the result from recursively applying the algorithm to $\Psi$. 
\end{tabular}
}
\caption{An algorithm for satisfiability of a bijunctive formula over ${\mathbb Q}$.} 
\label{fig:alg-hn-gen}
\end{figure}

Our algorithm can be found in Figure~\ref{fig:alg-hn-gen}. The algorithm contains 
a step based on \emph{Fourier-Motzkin elimination} (lines 14-19). The idea of Fourier-Motzkin elimination is that if $\Phi$ is a system of linear inequalities on $n$ variables, and $x$ is a variable from $\Phi$, we can compute a system of linear inequalities which is equivalent to
$\exists x. \Phi$. In particular, this system is satisfiability-equivalent to $\Phi$. 
\begin{itemize}
\item If $\phi \in \Phi$ is equivalent
to $t_1 \leq x$ where $t_1$ is linear expression which does not involve $x$, 
and $\psi \in \Phi$ is equivalent to $x \leq t_2$ 
where $t_2$ is linear expression which does not involve $x$, 
 then $\Phi$ implies the inequality $C(\phi,\psi) := (t_1 \leq t_2)$. 
\item if $\phi_1,\dots,\phi_k$ are all the constraints in $\Phi$ that
yield a lower bound on $x$ when fixing all other variables, and $\psi_1,\dots,\psi_l$ are the constraints in $\Phi$ that similarly yield upper bounds, then 
$$\Phi \setminus \{\phi_1,\dots,\phi_k,\psi_1,\dots,\psi_l\} \cup \{C(\phi_i,\psi_j) \mid i \leq k, j \leq l\}$$ 
 is a system which is equivalent to $\exists x. \Phi$. 
 \end{itemize}
 Hence, we can test satisfiability of $\Phi$
 by eliminating all variables one-by-one and testing whether 
 the resulting formula is equivalent to $\top$ (true)
 or $\perp$ (false). 
 Note that if $\phi$ and $\psi$ are TVPI constraints, then $C(\phi,\psi)$ is a TVPI constraint as well. 
The problem with Fourier-Motzkin elimination is that the number of new constraints
$C(\phi,\psi)$ is quadratic in general, 
and since we have to repeat this step $n$ times
we only get an exponential worst-time upper bound on the space and time complexity of the algorithm. 
As in Hochbaum-Naor, we can avoid this 
by using the Fourier-Motzkin step only in situations
where we can guarantee that the number of new inequalities cannot repeatedly grow quadratically,
\blue{as explained below.} 

\medskip 
{\bf Correctness.} 
Let $\Phi$ be a bijunctive formula over the variables $V$. 
Arbitrarily choose $x \in V$. 
\blue{Let $B$ be the sorted list of the coordinates of all break-points as computed by the algorithm. 
Our algorithm in Figure~\ref{fig:alg-hn-gen}
performs an interval search to find the 
largest entry $b_{\ell}$ of $B$
 such that the procedure PROPAGATE from Figure~\ref{fig:alg-as-gen} answers 
 `Yes' on input `$\Phi \wedge x \geq b_\ell$'. 
 If $\ell = |B|$ then $b_{\ell+1} := \infty$. 
 Note that $\Phi$ has a solution if and only if
 $\Phi \wedge x \geq b_{\ell} \wedge x < x_{\ell+1}$ has a solution. 
 Also note that every bend in $\Phi$ that involves $x$ has a unique disjunct that is \emph{weakest} for all solutions satisfying $x \geq b_{\ell} \wedge x < x_{\ell+1}$ in the sense that replacing the bend by this disjunct does not change the set of solutions satisfying $x \geq b_{\ell} \wedge x < x_{\ell+1}$.} 

We can now eliminate $x$ 
using Fourier-Motzkin elimination. 
\blue{The resulting
instance $\Psi$ has a solution if and only if $\Phi$ has a solution.}
\qed 



\medskip 
{\bf Running time.}
We have to eliminate $n := |V|$ variables. 
Let $m$ be the number of conjuncts of $\Phi$. 
We use the procedure PROPAGATE; recall that the running time of PROPAGATE is in $O(n m^2)$. For the binary search, the
 algorithm PROPAGATE is called at most \blue{$O(\log m)$} times. 
 
\blue{Note that in the instance $\Phi \wedge x \geq b_{\ell} \wedge x \leq x_{\ell+1}$ 
for each variable 
$y$ distinct from $x$ there exist at most two
most restrictive two-variable linear inequalities 
on the variables $x,y$; otherwise, there would have to be a breakpoint whose $x$-value is between $b_{\ell}$ and $b_{\ell+1}$, contrary to the assumptions.}
Hence, in each step, the number of two-variable linear inequalities from the
Fourier-Motzkin elimination step is at most quadratic in $n$. Also note that 
all the numbers involved in the computation have a representation size that is linear in the input size, essentially because they are linear expressions in the numbers from the input.
\blue{
No new breakpoints are introduced in the computation.
}
So the overall running time is in 
\blue{$O(n (\log m) n m n^2) = O(n^4 m \log m)$}. \qed


\begin{theorem}\label{thm:main}
Let 
$\Phi$ be a finite conjunction of bounds, two-variable linear inequalities, and bends (where numbers are represented in binary). 
Then there is an algorithm that decides in polynomial time in the representation size of $\Phi$ whether $\Phi$ has a solution over ${\mathbb Q}$ or not. 
\end{theorem}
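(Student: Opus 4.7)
The plan is to prove Theorem~\ref{thm:main} by exhibiting the two-layer algorithm described in Figures~\ref{fig:alg-as-gen} and~\ref{fig:alg-hn-gen} and verifying its correctness and running time. The outer layer generalises Hochbaum--Naor's Fourier--Motzkin-based elimination scheme from TVPI constraints to bijunctive formulas, and the inner layer (PROPAGATE) generalises Aspvall--Shiloach's bound propagation, using our generalised Shostak's theorem (Theorem~\ref{thm:bend-shostak}) in place of the classical one.

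First, I will argue correctness of PROPAGATE. Given a satisfiable bijunctive $\Phi$, a variable $x$, and a threshold $s$, the inductive invariant is that every derived bound $\beta_u^{\low}, \beta_u^{\high}$ is implied by $\Phi \wedge x \geq s$. Hence a `No' answer (when the derived bounds become inconsistent) is always correct. For a `Yes' answer, I need to show that the instance $\Psi$ obtained from $\Phi \wedge x \geq s$ by discarding redundant literals is handcuff consistent, so that satisfiability follows from Theorem~\ref{thm:bend-shostak}. Suppose not: by Corollary~\ref{cor:short-handcuff} there would be a short unsatisfiable handcuff $((x \geq s), P, C)$ mapping homomorphically into $\Psi$. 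Lemmas~\ref{lem:bend-discover} and~\ref{lem:bend-detect} applied to the propagation along (a variant of) $P$ and $C$ force either an unsatisfiability detected on line~13, or the discovery of a closed walk $D$ at line~20 whose residue tightens $\beta^{\low}_{r(y)}$ or $\beta^{\high}_{r(y)}$ strictly beyond the non-redundant literal, a contradiction. A symmetric argument handles the case where $\Psi$ is inconsistent only after replacing a single literal by a bound.

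Next, I will turn to the outer algorithm. The key observation is that along the vertical slab $\{b_{\ell} \leq x < b_{\ell+1}\}$ between two consecutive breakpoints (as in Definition~\ref{def:gen-breakpoints}) each bend involving $x$ has a unique disjunct that is weakest, so the bend can be replaced by that disjunct without altering the solutions inside the slab. Using PROPAGATE with binary search on the sorted breakpoint sequence $B$, I locate the largest $b_\ell \in B \cup \{-\infty\}$ with $\Phi \wedge x \geq b_\ell$ satisfiable; then $\Phi$ is satisfiable iff $\Phi \wedge b_\ell \leq x < b_{\ell+1}$ is, and after disjunct selection all constraints on $x$ become bounds or TVPI. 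A single Fourier--Motzkin step then eliminates $x$, producing an equisatisfiable $\Psi$ on one fewer variable to which we recurse.

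The main obstacle is controlling the running time, i.e., preventing the quadratic blow-up of Fourier--Motzkin from iterating. The decisive point is that inside the chosen slab, for every other variable $y$ there are at most two most restrictive two-variable inequalities relating $x$ and $y$ (any third would produce a breakpoint strictly between $b_\ell$ and $b_{\ell+1}$, contradicting the choice of slab). Therefore each elimination step adds only $O(n^2)$ new TVPI inequalities and introduces no new breakpoints, so the total number of constraints stays polynomial throughout the $n$ recursive eliminations. Combined with the $O(nm^2)$ cost of PROPAGATE, the $O(\log m)$ calls for the binary search, and the linear bit-size bound (all numbers derived are linear combinations of the input numbers with polynomially-bounded coefficients), this yields a strongly polynomial overall bound such as $O(n^4 m \log m)$, completing the proof.
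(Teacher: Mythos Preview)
Your proposal is correct and follows essentially the same approach as the paper: the same two-layer architecture (PROPAGATE inside a Hochbaum--Naor-style elimination), the same correctness argument for PROPAGATE via handcuff consistency and Theorem~\ref{thm:bend-shostak} together with Corollary~\ref{cor:short-handcuff} and Lemmas~\ref{lem:bend-discover}--\ref{lem:bend-detect}, the same breakpoint/slab argument for the outer loop, and the same running-time analysis culminating in the $O(n^4 m \log m)$ bound. There is nothing materially different to compare.
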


We mention that our algorithm is \emph{strongly polynomial}: it is polynomial in the Turing model and, additionally, it only performs a polynomial number of additions and multiplications of numbers in the input formulas to decide satisfiability. 
Corollary~\ref{cor:main} below states the consequence of Theorem~\ref{thm:main}
in the framework of constraint satisfaction problems; note that in this setting, the result
is independent from the encoding of the constraints in the input (since we have finitely many relations that are represented symbolically in the input). 

%

\begin{corollary}\label{cor:main} 
Let $R_1,\dots,R_{\ell}$ be semilinear relations on ${\mathbb Q}$ that are preserved by the median operation.
Then $\Csp({\mathbb Q};R_1,\dots,R_{\ell})$ can be solved in polynomial time. 
\end{corollary}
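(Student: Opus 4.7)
The plan is a direct reduction from $\Csp({\mathbb Q};R_1,\dots,R_\ell)$ to the decision problem of Theorem~\ref{thm:main}. By Theorem~\ref{thm:syntax}, each median-closed semilinear relation $R_i$ admits a definition by a bijunctive formula $\phi_i(y_1,\dots,y_{\ar(R_i)})$, that is, a conjunction of bounds, two-variable linear inequalities, and bends. Because the relations $R_1,\dots,R_\ell$ are part of the specification of the CSP (and not of its input), these bijunctive definitions can be fixed once and for all; in particular, their representation sizes are constants that do not depend on the input.

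Given an input $\Phi$ to $\Csp({\mathbb Q};R_1,\dots,R_\ell)$, that is, a finite conjunction of atomic formulas of the form $R_i(x_{j_1},\dots,x_{j_{\ar(R_i)}})$, I would replace each such atom by the bijunctive formula $\phi_i(x_{j_1},\dots,x_{j_{\ar(R_i)}})$ obtained by the obvious substitution of variables. The resulting conjunction $\Phi'$ is a bijunctive formula on the same variables as $\Phi$ whose representation size is linear in that of $\Phi$, and which is satisfiable over ${\mathbb Q}$ if and only if $\Phi$ is. I would then feed $\Phi'$ to the algorithm provided by Theorem~\ref{thm:main}, which decides satisfiability in time polynomial in the representation size of $\Phi'$, hence also in that of $\Phi$.

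There is no substantive obstacle beyond what has already been established: Theorems~\ref{thm:syntax} and~\ref{thm:main} together do all the work, and the reduction sketched above is purely syntactic. The only point worth checking is that variable substitution into each $\phi_i$ preserves satisfiability, which is immediate from the logical equivalence $R_i(y_1,\dots,y_{\ar(R_i)}) \Leftrightarrow \phi_i(y_1,\dots,y_{\ar(R_i)})$ guaranteed by Theorem~\ref{thm:syntax}. As remarked after Theorem~\ref{thm:main}, the algorithm is in fact strongly polynomial, so the same holds for the CSP algorithm obtained this way.
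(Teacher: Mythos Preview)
Your proposal is correct and follows essentially the same approach as the paper: replace each atomic constraint $R_i(\bar x)$ by its bijunctive definition from Theorem~\ref{thm:syntax} and then invoke the algorithm of Theorem~\ref{thm:main}. Your additional remarks about the definitions being fixed constants (so the blow-up is linear) and about strong polynomiality are accurate elaborations of the same argument.
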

\begin{proof}
For a given instance $I$ of $\Csp({\mathbb Q};R_1,\dots,R_{\ell})$, we replace each atomic
formula $R_i(x_1,\dots,x_k)$ by $\phi(x_1,\dots,x_k)$ where $\phi$ is the definition of $R_i$ by a bijunctive definition  (Theorem~\ref{thm:syntax}). The resulting formula
is satisfiable over ${\mathbb Q}$ if and only if 
$I$ is satisfiable in $({\mathbb Q};R_1,\dots,R_{\ell})$. Hence, polynomial-time tractability follows
from Theorem~\ref{thm:main}. 
\end{proof}

\section{Maximal Tractability}
\label{sect:maximality}
Let $\Gamma$ be a relational structure
and let $\Delta$ be a structure obtained from
$\Gamma$ by dropping some of the relations
(i.e., $\Delta$ has a smaller signature).
In this case,  
$\Delta$ is called a \emph{reduct} of $\Gamma$ and $\Delta$ is called an \emph{expansion} of $\Gamma$; the expansion is called \emph{strict}
if $\Gamma$ has some relation not present in $\Delta$. A structure $\Gamma$ whose relations are semilinear is called \emph{maximally tractable} if 
\begin{itemize}
\item $\Csp(\Delta)$ is in P for every reduct of $\Delta$ with finite signature; and
\item every strict expansion $\Gamma'$ of $\Gamma$ has a finite-signature reduct $\Delta$
such that $\Csp(\Delta)$ is NP-hard. 
\end{itemize}

\begin{theorem}\label{thm:maximality}
The set $\Gamma$ of median-closed semilinear relations over ${\mathbb Q}$ is maximally tractable.
\end{theorem}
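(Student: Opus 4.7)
The tractability half is immediate from Corollary~\ref{cor:main}: any finite-signature reduct of $\Gamma$ mentions only finitely many median-closed semilinear relations, and hence has a polynomial-time CSP.

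For the maximality half, let $\Gamma'$ be a strict expansion of $\Gamma$, and fix a semilinear relation $R \in \Gamma' \setminus \Gamma$ that is not preserved by median. Choose witness tuples $t_1, t_2, t_3 \in R$ of arity $k$ with $m := \median(t_1, t_2, t_3) \notin R$. The plan is to exhibit a finite-signature reduct $\Delta$ of $\Gamma'$ (comprising $R$ together with finitely many median-closed relations already present in $\Gamma$) such that a standard NP-hard Boolean CSP --- for concreteness, positive 1-in-3-SAT --- reduces to $\Csp(\Delta)$ by a primitive-positive interpretation.

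The strategy is to primitively positively define, from $R$ and suitable $\Gamma$-relations, a Boolean relation $S$ which is not preserved by majority; Schaefer's dichotomy then gives NP-hardness, because the domain restriction $x \in \{0,1\}$ and the full 2-SAT language on $\{0,1\}$ are both pp-definable in $\Gamma$ (being median-closed, as the introduction explains). For each coordinate $j \leq k$ the finite unary relation $x \in \{t_1(j), t_2(j), t_3(j)\}$ is in $\Gamma$, so the pp-formula $R(x_1, \ldots, x_k) \wedge \bigwedge_{j \leq k} x_j \in \{t_1(j), t_2(j), t_3(j)\}$ defines a relation on a coordinate-wise finite totally ordered domain that still contains $t_1, t_2, t_3$ but not $m$. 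Using TVPI equalities $x = ay + b$ (which are median-closed, since $\median$ commutes with monotone affine maps) together with the median-closed constraint $x \in \{0,1\}$, each coordinate's value set can be normalised to an order-isomorphic copy of $\{0,1\}$; after collapsing coordinates whose witness triples coincide, one obtains the desired non-majority-preserving Boolean relation $S$.

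The main technical obstacle is handling coordinates on which $t_1(j), t_2(j), t_3(j)$ take three distinct rational values: there the median value is inherent in the witness, and one must ensure that pp-eliminating this middle value (to cast the coordinate into a two-valued domain) does not destroy the global witness of median-failure. The plan is to case-split on which coordinate is responsible for $m \notin R$ and to pp-impose further median-closed bounds on all \emph{other} coordinates so that only two per-coordinate values survive, chosen to retain the presence of $t_1, t_2, t_3$ and the absence of $m$; alternatively, one may invoke the known dichotomy for CSPs on finite totally ordered domains, where non-median-preserving relations yield NP-hard CSPs, and translate the resulting hardness back to $\Gamma'$ without an explicit reduction to $\{0,1\}$.
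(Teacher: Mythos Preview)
Your fallback (``alternatively, one may invoke the known dichotomy for CSPs on finite totally ordered domains'') is precisely the paper's argument, and it is the approach that actually goes through cleanly. The paper cites P\"oschel--Kalu\v{z}nin to the effect that $\median$ generates a \emph{minimal clone} on any finite linearly ordered domain $D$; hence the median-closed relations on $D$ are maximally tractable, so for $S := R \cap D^k$ (with $D$ the set of entries of $t_1,t_2,t_3$) there exist finitely many median-closed $S_1,\dots,S_\ell$ on $D$ with $\Csp(D;S,S_1,\dots,S_\ell)$ NP-hard. The translation back to $\mathbb{Q}$ that you leave implicit is simply: view the unary relation $D$ and each $S_i$ as relations over $\mathbb{Q}$ (all median-closed, hence in $\Gamma$), and observe that $S(\bar x) \Leftrightarrow R(\bar x)\wedge \bigwedge_j D(x_j)$, giving the reduction to $\Csp(\mathbb{Q};R,D,S_1,\dots,S_\ell)$.

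Your primary route---forcing each coordinate down to a two-element domain and invoking Schaefer---has a genuine obstacle that you flag but do not overcome. On a coordinate $j$ where $t_1(j),t_2(j),t_3(j)$ are pairwise distinct, restricting that coordinate to any two-element subset necessarily excludes one of the witness tuples $t_1,t_2,t_3$; so the proposed ``impose bounds on all \emph{other} coordinates'' cannot succeed once there are two or more such coordinates, and the phrase ``which coordinate is responsible for $m\notin R$'' does not type-check (non-membership of $m$ is a tuple-level fact). The paper avoids this entirely by never insisting on a Boolean domain: it works directly on the finite ordered domain $D$, where the minimal-clone fact delivers maximal tractability without any coordinate-by-coordinate normalisation.
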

\begin{proof}[Sketch]
Polynomial-time tractability of $\Csp(\Delta)$ 
for finite subsets $\Delta$ of $\Gamma$ is Corollary~\ref{cor:main}. 
We show the second part of the statement by using the maximal tractability of the set of all 
median-closed constraints
on a finite linearly ordered domain $D$.
It is known (R. P\"oschel and L. Kalu\v{z}nin~\cite{KaluzninPoeschel}, Theorem 4.4.5) that the median operation on $D$ generates a \emph{minimal clone}.  
It follows from \blue{the discussion in~\cite{Max} (Section 2.3)}
that the set of all median-closed relations over
$D$ is maximally tractable. 
That is, for every relation $S \subseteq D^k$ that is not preserved by the median operation 
there are finitely many median-closed relations 
$S_1,\dots,S_{\ell}$ on $D$
such that $\Csp(D;S,S_1,\dots,S_{\ell})$ is NP-hard.  

Let $R \subseteq {\mathbb Q}^k$ be a relation which
is not median-closed. So there exist $k$-tuples
$t_1,t_2,t_3 \in R$ such that $\median(t_1,t_2,t_3) \notin R$. 
Let $D$ be the set of all entries of the tuples $t_1,t_2,t_3$. Note that the relation
$S := R \cap D^k$ is not preserved by the median
operation with respect to the linear order induced by the linear order of ${\mathbb Q}$ on $D$. 
Hence, there are finitely many median-closed relations 
$S_1,\dots,S_{\ell}$ on $D$
such that $\Csp(D;S,S_1,\dots,S_{\ell})$ is NP-hard. 

We now view the relations $D,S_1,\dots,S_{\ell}$
on $D$  
as relations over ${\mathbb Q}$; note that these
relations are preserved by the median operation on ${\mathbb Q}$. To show that 
$\Csp({\mathbb Q};R,D$, $S_1,\dots,S_{\ell})$ 
is NP-hard, it suffices to give a polynomial-time reduction from $\Csp(D;$ $S,S_1,\dots,S_{\ell})$ to $\Csp({\mathbb Q};R,D,S_1,\dots,S_{\ell})$. 
This follows from the fact that the relation $S$ can be defined as $$S(x_1,\dots,x_k) \Leftrightarrow R(x_1,\dots,x_k) \wedge D(x_1) \wedge \cdots \wedge D(x_k)$$
over the structure $({\mathbb Q};R,D,S_1,\dots,S_{\ell})$. 
\end{proof}

\section{Conclusion and Open Problems}
\label{sect:open}
We have identified a new polynomial-time tractable
class of constraints over the rational numbers that properly contains the class of TVPI constraints. Our class is maximally tractable
within the class of all semilinear relations.
\blue{A complexity dichotomy for all semilinear constraint languages would be a powerful
extension of the recently proved finite-domain dichotomy conjecture~\cite{FederVardi,BulatovFVConjecture,ZhukFVConjecture}. } \blue{See~\cite{Jonsson2016912} for other results towards this ambitious research goal.}

We mention that there are two other candidates of maximally tractable classes of semilinear relations,
namely the class of semilinear relations preserved by $\max$, and the class preserved by $\min$; the second item in the definition of maximal tractability has been shown in~\cite{BodirskyMaminoTropical}, but the first item, i.e., polynomial-time tractability, is open.
\blue{Indeed, a polynomial-time algorithm for $\max$-closed semilinear constraints would in particular give a polynomial-time algorithm for the max-atom problem and mean payoff games; see the discussion in~\cite{BodirskyMaminoTropical}. }

Another interesting concrete example of a constraint satisfaction problem with semilinear constraints over the rationals 
that we conjecture to be in P is 
$$\Csp({\mathbb Q};X,S)$$
\begin{align*}
\text{ where} \quad  X & := \{(x,y,z) \in {\mathbb Q}^3 \mid (x=y<z) \vee (y=z<x) \vee (z=x<y)\} \\
\text{and} \quad S & := \{(x,y) \in {\mathbb Q}^2 \mid x=y+1\} .
\end{align*}
Note that $\Csp({\mathbb Q};X)$ and $\Csp({\mathbb Q};S)$ are known to be in P, but it is unclear how to generalise the algorithm given in~\cite{tcsps-journal} for $\Csp({\mathbb Q};X)$ to deal with constraints of the form $x = y + 1$.

\bibliographystyle{abbrv}
\bibliography{../../global.bib}

\def\cprime{$'$} \def\cprime{$'$}
\begin{thebibliography}{10}

\bibitem{Aspvall}
B.~Aspvall.
\newblock Efficient algorithms for certain satisfiability and linear
  programming problems.
\newblock PhD thesis, Stanford University, 1981.

\bibitem{AspvallShiloach}
B.~Aspvall and Y.~Shiloach.
\newblock A polynomial time algorithm for solving systems of linear
  inequalities with two variables per inequality.
\newblock {\em {SIAM} J. Comput.}, 9(4):827--845, 1980.

\bibitem{BP}
K.~A. Baker and A.~F. Pixley.
\newblock Polynomial interpolation and the {C}hinese remainder theorem for
  algebraic systems.
\newblock {\em Mathematische Zeitschrift}, 143:165--174, 1974.

\bibitem{Essentially-convex}
M.~Bodirsky, P.~Jonsson, and T.~von Oertzen.
\newblock Essential convexity and complexity of semi-algebraic constraints.
\newblock {\em Logical Methods in Computer Science}, 8(4), 2012.
\newblock An extended abstract about a subset of the results has been published
  under the title \emph{Semilinear Program Feasibility} at ICALP'10.

\bibitem{tcsps-journal}
M.~Bodirsky and J.~K\'ara.
\newblock The complexity of temporal constraint satisfaction problems.
\newblock {\em Journal of the ACM}, 57(2):1--41, 2009.
\newblock An extended abstract appeared in the Proceedings of the Symposium on
  Theory of Computing (STOC).

\bibitem{BodirskyMaminoTropical}
M.~Bodirsky and M.~Mamino.
\newblock Tropically convex constraint satisfaction.
\newblock {\em Theory of Computing Systems}, pages 1--29, 2017.
\newblock An extended abstract of the paper appeared under the title
  ``Max-Closed Semilinear Constraints'' in the proceedings of CSR'16; preprint
  available under ArXiv:1506.04184.

\bibitem{BodirskyMaminoViola}
M.~Bodirsky, M.~Mamino, and C.~Viola.
\newblock Submodular functions and valued constraint satisfaction problems over
  infinite domains.
\newblock In {\em Accepted for publication in the proceedings of CSL}, 2018.
\newblock Preprint available under ArXiv:1804.01710.

\bibitem{BulatovFVConjecture}
A.~A. Bulatov.
\newblock A dichotomy theorem for nonuniform {CSP}s.
\newblock In {\em 58th {IEEE} Annual Symposium on Foundations of Computer
  Science, {FOCS} 2017, Berkeley, CA, USA, October 15-17, 2017}, pages
  319--330, 2017.

\bibitem{Max}
A.~A. Bulatov, A.~A. Krokhin, and P.~Jeavons.
\newblock The complexity of maximal constraint languages.
\newblock In {\em Proceedings of the Symposium on Theory of Computing (STOC)},
  pages 667--674, 2001.

\bibitem{JBK}
A.~A. Bulatov, A.~A. Krokhin, and P.~G. Jeavons.
\newblock Classifying the complexity of constraints using finite algebras.
\newblock {\em SIAM Journal on Computing}, 34:720--742, 2005.

\bibitem{Rendezvous}
H.~Chen.
\newblock A rendezvous of logic, complexity, and algebra.
\newblock {\em SIGACT News}, 37(4):85--114, 2006.

\bibitem{ACandFriends}
H.~Chen, V.~Dalmau, and B.~Gru{\ss}ien.
\newblock Arc consistency and friends.
\newblock {\em J. Log. Comput.}, 23(1):87--108, 2013.

\bibitem{Disj}
D.~Cohen, P.~Jeavons, P.~Jonsson, and M.~Koubarakis.
\newblock Building tractable disjunctive constraints.
\newblock {\em Journal of the ACM}, 47(5):826--853, 2000.

\bibitem{FederVardi}
T.~Feder and M.~Y. Vardi.
\newblock The computational structure of monotone monadic {SNP} and constraint
  satisfaction: {a} study through {D}atalog and group theory.
\newblock {\em {SIAM} Journal on Computing}, 28:57--104, 1999.

\bibitem{FerranteRackoff}
J.~Ferrante and C.~Rackoff.
\newblock A decision procedure for the first order theory of real addition with
  order.
\newblock {\em SIAM Journal on Computing}, 4(1):69--76, 1975.

\bibitem{HochbaumNaor}
D.~S. Hochbaum and J.~Naor.
\newblock Simple and fast algorithms for linear and integer programs with two
  variables per inequality.
\newblock {\em SIAM Journal on Computing}, 23(6):1179--1192, 1994.

\bibitem{CCC}
P.~Jeavons, D.~Cohen, and M.~Cooper.
\newblock Constraints, consistency and closure.
\newblock {\em Artificial Intelligence}, 101(1-2):251--265, 1998.

\bibitem{Jonsson2016912}
P.~Jonsson and J.~Thapper.
\newblock Constraint satisfaction and semilinear expansions of addition over
  the rationals and the reals.
\newblock {\em Journal of Computer and System Sciences}, 82(5):912--928, 2016.

\bibitem{Khachiyan}
L.~Khachiyan.
\newblock A polynomial algorithm in linear programming.
\newblock {\em Doklady Akademii Nauk SSSR}, 244:1093--1097, 1979.

\bibitem{GenVCSP15}
V.~Kolmogorov, A.~A. Krokhin, and M.~Rolinek.
\newblock The complexity of general-valued csps.
\newblock In {\em {IEEE} 56th Annual Symposium on Foundations of Computer
  Science, {FOCS} 2015, Berkeley, CA, USA, 17-20 October, 2015}, pages
  1246--1258, 2015.

\bibitem{KolmogorovThapperZivny15}
V.~Kolmogorov, J.~Thapper, and S.~\v{Z}ivn\'y.
\newblock The power of linear programming for general-valued {CSP}s.
\newblock {\em {SIAM} J. Comput.}, 44(1):1--36, 2015.

\bibitem{Kozik16}
M.~Kozik.
\newblock Weak consistency notions for all the csps of bounded width.
\newblock In {\em Proceedings of the 31st Annual {ACM/IEEE} Symposium on Logic
  in Computer Science, {LICS} '16, New York, NY, USA, July 5-8, 2016}, pages
  633--641, 2016.

\bibitem{Megiddo:1983:TGP}
N.~Megiddo.
\newblock Towards a genuinely polynomial algorithm for linear programming.
\newblock {\em SIAM Journal on Computing}, 12(2):347--353, 1983.

\bibitem{KaluzninPoeschel}
R.~P\"oschel and L.~A. Kalu\v{z}nin.
\newblock {\em Funktionen- und Relationenalgebren}.
\newblock Deutscher Verlag der Wissenschaften, 1979.

\bibitem{RestrepoWilliamson}
M.~Restrepo and D.~P. Williamson.
\newblock A simple gap-canceling algorithm for the generalized maximum flow
  problem.
\newblock {\em Math. Program.}, 118(1):47--74, 2009.

\bibitem{Shostak}
R.~E. Shostak.
\newblock Deciding linear inequalities by computing loop residues.
\newblock {\em J. {ACM}}, 28(4):769--779, 1981.

\bibitem{Smale}
S.~Smale.
\newblock Mathematical problems for the next century.
\newblock In V.~I. Arnold, M.~Atiyah, P.~Lax, and B.~Mazur, editors, {\em
  Mathematics: frontiers and perspectives}, pages 271--294. American
  Mathematical Society, 1999.

\bibitem{ZhukFVConjecture}
D.~Zhuk.
\newblock A proof of {CSP} dichotomy conjecture.
\newblock In {\em 58th {IEEE} Annual Symposium on Foundations of Computer
  Science, {FOCS} 2017, Berkeley, CA, USA, October 15-17, 2017}, pages
  331--342, 2017.

\end{thebibliography}

\appendix

\section{Proof of Theorem~\ref{thm:shostak}}
\label{sect:proof-Shostak}
We show Shostak's theorem in two steps.

\begin{lemma}\label{lem:shostak} 
An TVPI instance is unsatisfiable if and only if it has a handcuff refutation. 
\end{lemma}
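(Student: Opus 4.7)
The plan is to handle the two directions separately, with the forward direction being routine and the converse requiring a Farkas-style argument together with a structural analysis of the support graph.

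The ``if'' direction is straightforward: given an unsatisfiable handcuff $(C,P,D)$ with a homomorphism $h$ to $\Phi$, every bend appearing in the handcuff corresponds, via $h$, to a conjunct of $\Phi$. Since identifying variables (which is what a non-injective homomorphism does) can only strengthen a conjunction of constraints, the image sub-conjunction of $\Phi$ inherits unsatisfiability from the handcuff, and hence $\Phi$ itself is unsatisfiable.

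For the ``only if'' direction, I would apply Farkas' lemma in the form that handles mixed weak and strict inequalities: the unsatisfiability of $\Phi$ produces non-negative multipliers $\lambda_e$, one per constraint $e \in \Phi$, whose $\lambda$-weighted sum derives an explicit contradiction $0 \leq -\epsilon$ (with at least one strict constraint active when needed). Pick such a certificate with minimum support $S \subseteq \Phi$, and form the support multigraph $G$: the vertices are the variables appearing in $S$, each TVPI constraint contributes an edge between its two variables, and each bound contributes a half-edge. The Farkas condition says that at each variable $v$ the $\lambda$-weighted sum of $v$-coefficients vanishes, and minimality of the support forces every vertex of $G$ to have degree at least two (otherwise the unique constraint at a degree-one vertex could be dropped while preserving the certificate).

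The combinatorial heart of the argument is showing that a minimal $G$ with this flow-balance and degree condition must have handcuff topology: two (possibly length-zero) cycles joined by a path. Observing first that $G$ contains a cycle, one argues that either the residue of a single such cycle is already infeasible --- in which case a degenerate handcuff (whose second cycle is a suitable bound extracted from $S$) refutes $\Phi$ --- or, generically, $G$ consists of exactly two cycles together with a path between them, with the residues of the two cycles giving bounds that are incompatible along the path residue computed via Lemma~\ref{lem:compose-bends}. Any proper ``theta graph'' or larger $2$-edge-connected configuration is ruled out, since by decomposing the flow one can always isolate a strictly smaller handcuff-shaped sub-support that still supports a Farkas combination, contradicting minimality. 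Once the handcuff $(C,P,D)$ is extracted, the inclusion of its variables into $V$ is the required homomorphism, and the Farkas certificate restricted to the handcuff witnesses that $\res_C(x) \wedge \res_P(x,y) \wedge \res_D(y)$ is unsatisfiable.

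The main obstacle is this structural step: the interaction of the sign patterns of the $a_i, b_i$ in the TVPI constraints of $S$ with the flow-balance conditions is delicate, and ruling out theta graphs (or showing that they split into unsatisfiable handcuffs) requires careful case analysis on the signs of coefficients along the cycles of $G$. Handling the degenerate cases --- where a ``cycle'' of the handcuff collapses to a single bound, or where the handcuff path is short --- will be essentially bookkeeping, but the minimality-plus-sign-analysis argument is where the real work lies.
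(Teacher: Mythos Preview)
Your Farkas-based approach is a genuine alternative to the paper's proof, but the structural claim at its core is false. The paper proves the contrapositive constructively: assuming no handcuff refutation exists, it builds a satisfying assignment one variable at a time, fixing each $x_i$ to a value $c$ lying between the strongest lower and upper bounds on $x_i$ derivable from cycle-plus-path configurations in the current instance $\Phi_{i-1}$, and then verifying that $\Phi_{i-1} \wedge x_i \leq c \wedge x_i \geq c$ again has no handcuff refutation. No structural analysis of an infeasibility certificate is ever needed.

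The gap in your argument is the assertion that a minimal Farkas support cannot be a theta graph because ``by decomposing the flow one can always isolate a strictly smaller handcuff-shaped sub-support that still supports a Farkas combination.'' This is false. Take the two-variable instance
\[
e_1\colon\; u - v \leq 0,\qquad e_2\colon\; -u + 2v \leq 0,\qquad e_3\colon\; -u - v \leq -1.
\]
The unique (up to scaling) Farkas certificate has multipliers $(3,2,1)$, yet every proper subset of $\{e_1,e_2,e_3\}$ is satisfiable, so the minimal support is an irreducible theta (two vertices, three parallel edges). The instance \emph{does} have a handcuff refutation, but only via a handcuff that reuses edges: the cycle $(e_2,e_3)$ at $u$ has residue $u \geq \tfrac{2}{3}$, the path $(e_1)$ gives $u \leq v$, and the cycle $(e_1,e_2)$ at $v$ has residue $v \leq 0$. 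This handcuff has five edge-occurrences mapping onto three distinct constraints, so it is \emph{larger} than the Farkas support, and no minimality argument on that support will produce it. To make the Farkas route work you would need a different mechanism---for instance, unfolding the certificate into a closed walk with infeasible residue and then extracting a handcuff from the walk---rather than the sub-support argument you sketched.
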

\begin{proof}
We show inductively that for every $i \leq n$
there exists a bijunctive formula $\Phi_i$ without handcuff refutation 
and 
a map $s_{i} \colon S_{i} \to {\mathbb Q}$ such that $\Phi_i$ implies
 \begin{align}
 \Phi \wedge \bigwedge_{x \in S_{i}} (x \geq s_{i}(x) \wedge x \leq s_{i}(x)).
 \label{eq:key}
 \end{align}
 If we succeed defining such a map for $i = n$, then $s_n$ is a solution to $\Phi$ and we are done. 
 The map $s_0$ has 
empty domain and $\Phi_0 := \Phi$ satisfies the statement by assumption. So suppose that we have already defined $s_{i-1}$ for $i \leq n$, and that we want to define $s_i$. 
For $j=1$ and $j=2$, let $P_j$ be a path from $z_j$ to $y$ with a variable set $U_j$ of size at most $n$ and $C_j$ a
cycle starting and ending in $z_j$ with a variable set $V_j$
of size at most $n$ such that $U_j \cap V_j = \{y\}$ 
and $r_j \colon U_j \cup V_j \to V$ a homomorphism from $P_jC_j$ to $\Phi_{i-1}$ 
such that $r_j(y) = x_i$,
the lower bound $\alpha_1(y)$ 
implied by $\phi_{P_1}(z_1,y) \wedge \phi_{C_1}(y)$ is strongest possible, and
the upper bound $\alpha_2(y)$ 
implied by $\phi_{P_2}(z_2,y) \wedge \phi_{C_2}(y)$ 
is strongest possible.

\medskip 
{\bf Claim 1.} $\alpha_1(y) \wedge \alpha_2(y)$
has a solution $c \in {\mathbb Q}$. 

\red{Otherwise, let $P'_2,C_2'$ be obtained 
from $P_2,C_2$ by 
renaming the variables so that $C_1,P_1$
and $P_2',C_2'$ only share the variable $y$. 
Then there is a homomorphism from the unsatisfiable handcuff $(C_1,P_1,P'_2,C'_2)$ to
$\Phi_{i-1}$, in contradiction to the assumptions.}

\medskip 
We now define $s_i$ as the extension of $s_{i-1}$ where $s_i(x_i) = c$. 

\medskip
{\bf Claim 2.} 
$\Phi_i := \Phi_{i-1} \wedge x_i \leq c \wedge x_i \geq c$ 
 does not have a handcuff refutation: since $\Phi_{i-1}$ does not contain
such a refutation, any unsatisfiable handcuff 
$(C,P,D)$ with homomorphism to $\Phi_i$ must be of the form $((x_i \leq c),P,D)$ or of the form
$(D,P,(x_i \leq c))$. 
But then $\phi_{P}(x_i,y) \wedge \phi_{D}(y)$
\red{and therefore $\Phi_{i-1}$}
would imply a stronger lower bound for $x_i$ 
than $\alpha_1(x_i)$, a contradiction. The second case is similar. 
Note that $\Phi_i$ implies~(\ref{eq:key})
and that $\Phi_i$ and $s_i$ therefore satisfy the requirements, concluding the induction. 
\end{proof}

Theorem~\ref{thm:shostak} is an immediate
consequence of Lemma~\ref{lem:shostak}
and the following lemma. 

\begin{lemma}
If $\Phi$ is an TVPI instance with a handcuff refutation. Then $\Phi$ contains a path $P$
from $x_0$ to $x_k$ and two cycles with residue inequality $\alpha(x_0)$ and $\beta(x_1)$ 
such that $\alpha(x_0) \wedge \res_P(x_0,x_k) \wedge \beta(x_k)$ is unsatisfiable.
\end{lemma}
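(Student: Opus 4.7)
The plan is to proceed by induction on the total number of bends in the handcuff, $N := |C| + |P| + |D|$, and show that a shortest handcuff refutation of $\Phi$ must already be realized in $\Phi$ by a literal (non-collapsing) copy, which then provides the path-plus-two-cycles witness directly.

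Let $(C,P,D)$ be a handcuff refutation of $\Phi$ of minimum total length $N$, witnessed by a homomorphism $r$ to $\Phi$, and set $u := r(x)$, $v := r(y)$ for the designated endpoints of $C$ and $D$. If $r$ is injective on the vertex set of the handcuff, then $r(C)$, $r(P)$, $r(D)$ are, respectively, a simple cycle through $u$, a simple path from $u$ to $v$, and a simple cycle through $v$, all inside $\Phi$, and the disjointness requirements from the definition of handcuff transfer to the images. Since the residue of a cycle or path depends only on the sequence of bends, $\res_{r(C)}(u) \wedge \res_{r(P)}(u,v) \wedge \res_{r(D)}(v)$ is unsatisfiable, which is exactly the required witness.

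The core of the argument is to show that if $r$ is not injective then a strictly shorter handcuff refutation exists, contradicting minimality. Suppose $r(z_1) = r(z_2) = w$ for distinct handcuff vertices $z_1, z_2$. I would perform a case analysis on where $z_1, z_2$ live. In the prototypical case when both lie in $C$, the cycle $C$ splits at the two occurrences into two closed sub-walks: one that becomes a cycle $C_1$ at $x$ (by identifying $z_1$ with $z_2$) and one that becomes a cycle $C_2$ at a new vertex $w'$. I would form two candidate smaller handcuffs: $(C_1, P, D)$, and one of the form $(C_2, P', D)$ where $P'$ is obtained by rerouting from $w'$ through the remaining segment of $C$ into $P$. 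Using the TVPI-specific fact that the residue of a closed walk at a vertex is implied by the conjunction of residues along its decomposition into a simple cycle at the vertex plus additional closed sub-walks, the unsatisfiability of the original handcuff necessarily transfers to at least one of the two candidates. The cases where $z_1, z_2$ both lie in $P$ or in $D$, or across two pieces (say $z_1 \in C$, $z_2 \in P$), are handled analogously by splitting and rerouting; in each case the resulting candidate handcuff is strictly shorter.

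The main obstacle is the combinatorics of the case analysis, specifically verifying (i) that in each splitting one can re-establish the disjointness requirements defining a handcuff (essentially by renaming fresh variables along the way, so that the homomorphism still maps into $\Phi$), and (ii) that the smaller candidate is still unsatisfiable. Point (ii) rests on the TVPI property that, because each bend is a two-variable weak/strict linear inequality, any walk's residue is obtainable from the residues of the simple path and the closed sub-walks attached at repeated vertices, so the essential contradiction never hides inside a reducible loop; it must survive in at least one of the candidates produced by the split. Once the induction bottoms out at an injective $r$, the path-and-two-cycles witness required by the lemma is exactly $r(C), r(P), r(D)$ with residues inherited from the handcuff.
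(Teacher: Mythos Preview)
Your approach is essentially the paper's: take a handcuff refutation of minimal total length and show that any vertex collision under the homomorphism yields a strictly shorter one, so the minimal one must embed as a path and two cycles in $\Phi$. One small observation: the lemma's conclusion does not ask for a handcuff inside $\Phi$; it only asks for a path and two cycles whose residues jointly contradict, with no mutual-disjointness requirement. The paper exploits this and only needs to rule out collisions \emph{within} $P$, \emph{within} $C$, and \emph{within} $D$ separately (each via the same ``split off a sub-cycle $E$ at the repeated vertex and observe that one of the resulting shorter configurations is still unsatisfiable'' move you describe). By aiming for full injectivity of $r$ you additionally treat cross-piece collisions, which is harmless but unnecessary.
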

\begin{proof}
Let $h$ be a homomorphism from the unsatisfiable handcuff $(C,P,D)$ to $\Phi$. 
Pick $h$ and $(C,P,D)$ such that 
$|CPD|$ is minimal. 
If the image of 
$C$ and of $D$ under $h$ in $\Phi$  are cycles 
and the image of $P$ under $h$ in $\Phi_{i-1}$ is a path then we are done. 
Otherwise, if the image of $P$ under $h$ is not a path there must 
be variables $z_1,z_2$ in $P$ such that $h(z_1)=h(z_2)$. 
Write $P$ as $Q_1Q_2Q_3$ where
$Q_2$ is a path from $z_1$ to $z_2$. 
Let $E$ be the cycle obtained from $Q_2$ by replacing both $z_1$ and $z_2$ by a new variable $z$. Note that
$\res_{CQ_1}(z) \wedge \res_{E}(z) \wedge \res_{Q_3D}(z)$ is unsatisfiable.
We distinguish the following cases.
\begin{itemize}
\item $\res_{CQ_1}(z) \wedge \res_{Q_3D}(z)$  is unsatisfiable. 
\item $\res_{C_1Q_1}(z) \wedge 
\res_{E}(z)$ is unsatisfiable.  
\item $\res_{E}(z) \wedge \res_{Q_2C'_2}(z)$  is unsatisfiable.  
\end{itemize}
 In all three cases, we have found a smaller handcuff refutation of $\Phi$, a contradiction to the minimal choice of $(C,P,Q)$. 
 
Now suppose $C$ is not a cycle;
 the case that $D$ is not a cycle is analogous. There must be variables $z_1,z_2$ that appear in $C$ such that $h(z_1)=h(z_2)$. 
Write $C$ as $C_1QC_2$ where
$Q$ is a path from $z_1$ to $z_2$. 
Let $E$ be the cycle obtained from $Q$ by replacing both $z_1$ and $z_2$ by a new variable $z$. Note that
$\res_{DPC_1}(z) \wedge \res_{E}(z) \wedge \res_{DPC_2}(z)$ is unsatisfiable. 
Again we distinguish three cases and obtain
a shorter handcuff refutation in each case, contradiction. 
\end{proof} 

\end{document}